\def\BibTeX{{\rm B\kern-.05em{\sc i\kern-.025em b}\kern-.08em
    T\kern-.1667em\lower.7ex\hbox{E}\kern-.125emX}}
\newcolumntype{L}[1]{>{\raggedright\arraybackslash}m{#1}}
\newcolumntype{C}[1]{>{\centering\arraybackslash}m{#1}}
\newcolumntype{R}[1]{>{\raggedleft\arraybackslash}m{#1}}
\newtheorem{proposition}{Proposition}
\begin{document}

\title{Dynamic DNN Decomposition for Lossless Synergistic Inference}

\author{
    \IEEEauthorblockN{Beibei Zhang, Tian Xiang, Hongxuan Zhang, Te Li, Shiqiang Zhu, Jianjun Gu}
    \IEEEauthorblockA{Intelligent Robotics Research Center, Zhejiang Lab, Hangzhou, China
    \\\{beibei, txiang, hongxuan, lite, zhusq, jgu\}@zhejianglab.com}
}

\maketitle
\begin{abstract}
Deep neural networks (DNNs) sustain high performance in today's data processing applications. 
DNN inference is resource-intensive 
thus is difficult to fit into a mobile device. 
An alternative is to offload the DNN inference to a cloud server. 
However, such an approach requires heavy raw data transmission 
between the mobile device and the cloud server, 
which is not suitable for mission-critical 
and privacy-sensitive applications such as autopilot. 
To solve this problem, recent advances unleash DNN services using the edge computing paradigm. 
The existing approaches split a DNN into two parts 
and deploy the two partitions to computation nodes at two edge computing tiers. 
Nonetheless, these methods overlook collaborative device-edge-cloud computation resources. 
Besides, previous algorithms demand the whole DNN re-partitioning to adapt to 
computation resource changes and network dynamics. 
Moreover, for resource-demanding convolutional layers, 
prior works do not give a parallel processing strategy without loss of accuracy at the edge side. 
To tackle these issues, we propose D$^3$, a dynamic 
DNN decomposition system for synergistic inference without precision loss. 
The proposed system introduces a heuristic algorithm named 
horizontal partition algorithm to split a DNN into three parts. 
The algorithm can partially adjust the partitions at 
run time according to processing time and network conditions. 
At the edge side, a vertical separation module 
separates feature maps into tiles 
that can be independently run on different edge nodes in parallel. 
Extensive quantitative evaluation of five popular DNNs illustrates that D$^3$ 
outperforms the state-of-the-art counterparts up to 3.4$\times$ in end-to-end DNN inference time 
and reduces backbone network communication overhead up to 3.68$\times$.
\end{abstract}

\begin{IEEEkeywords}
Distributed computing, Edge computing, DNN inference acceleration.
\end{IEEEkeywords}

\section{introduction}
\begin{figure*}
	\centering
	\begin{subfigure}[t]{0.3\textwidth}
		\centering
		\includegraphics[width=\textwidth]{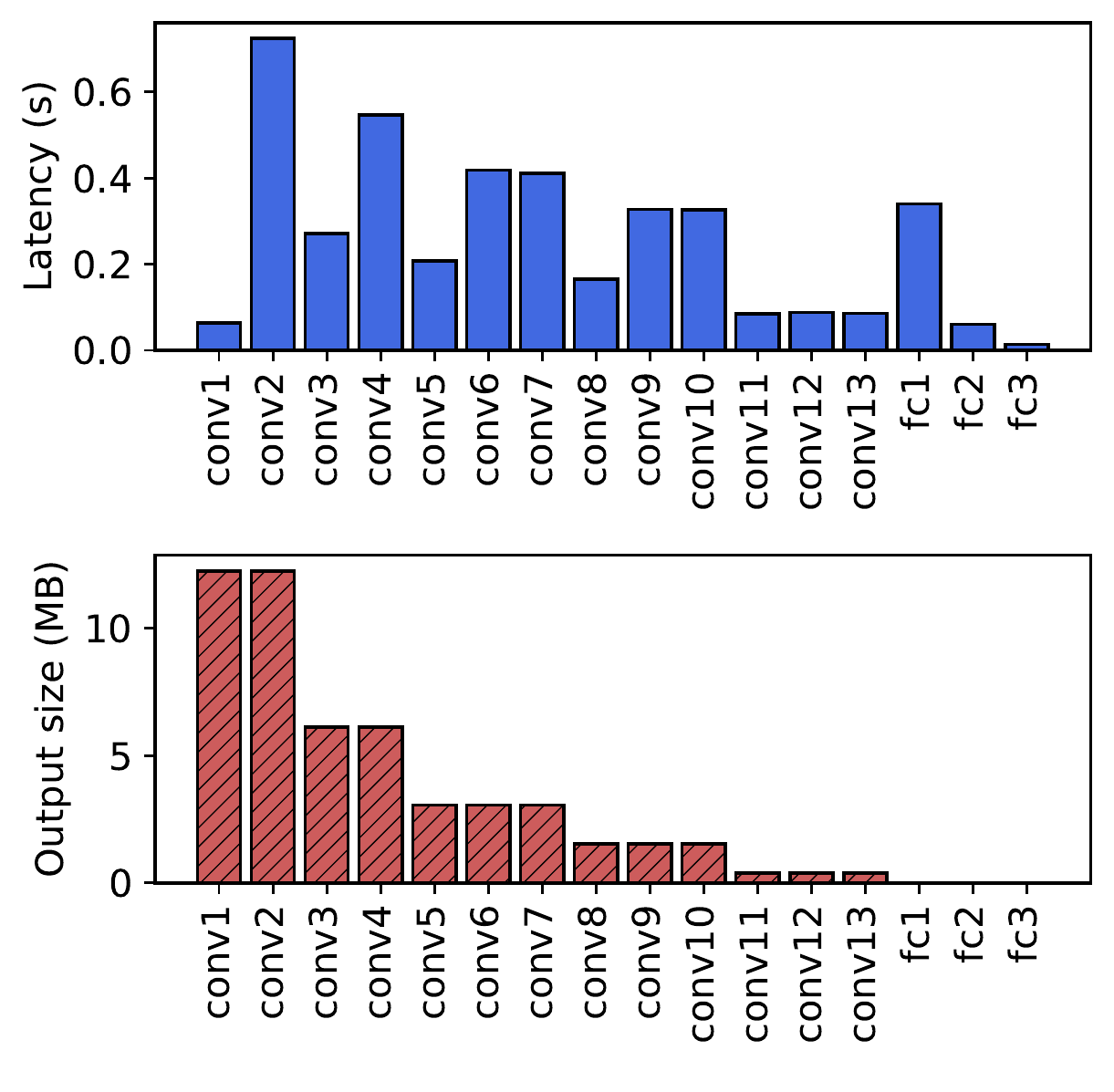}
		\caption{VGG-16}
		\label{fig:vgg16}
	\end{subfigure}
	\hfill
	\begin{subfigure}[t]{0.3\textwidth}
		\centering
		\includegraphics[width=\textwidth]{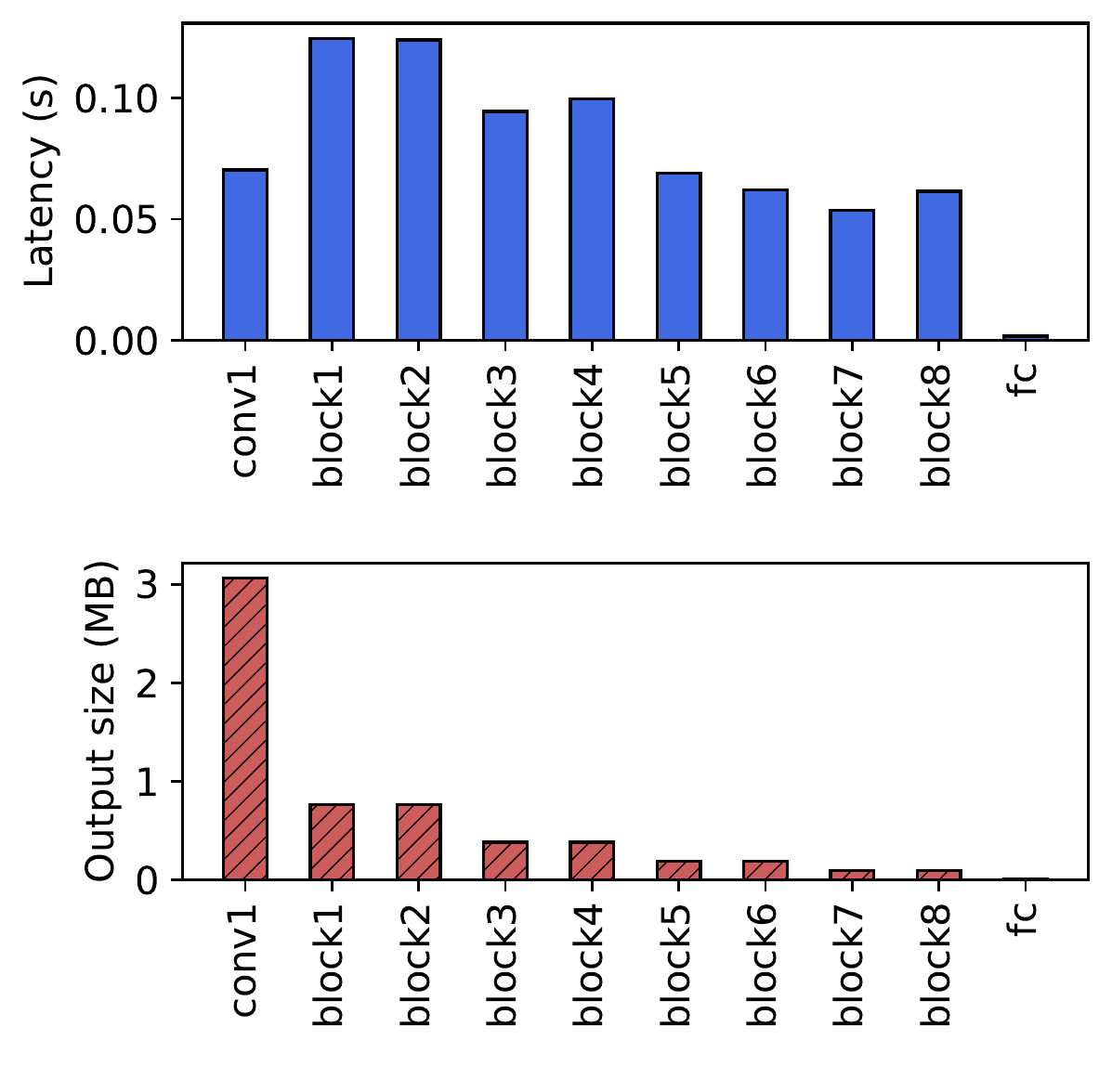}
		\caption{ResNet-18}
		\label{fig:resnet18}
	\end{subfigure}
	\hfill
	\begin{subfigure}[t]{0.3\textwidth}
		\centering
		\includegraphics[width=\textwidth]{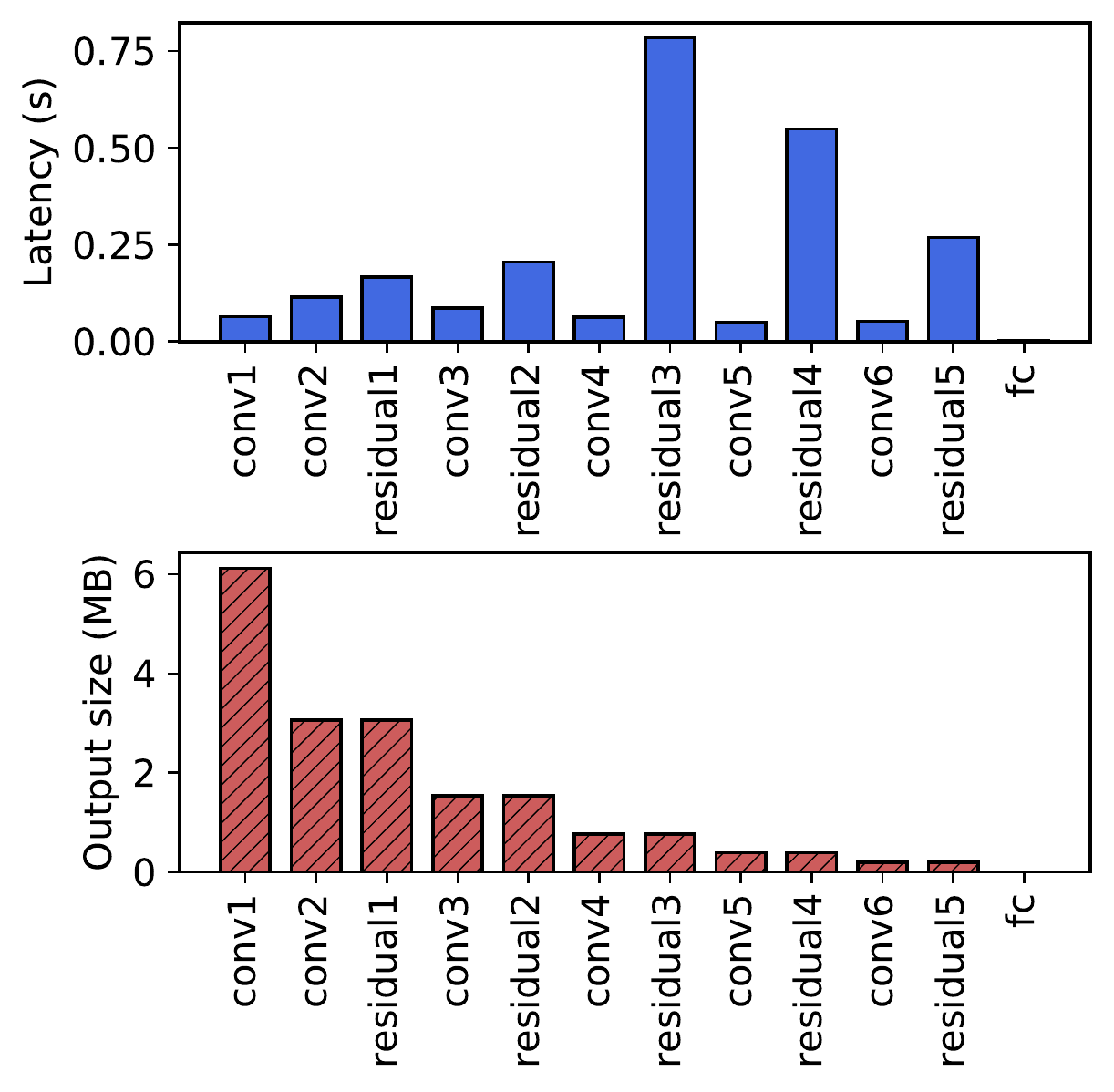}
		\caption{Darknet-53}
		\label{fig:darknet53}
	\end{subfigure}
	\caption{With the input size of $3 \times 224 \times 224$, we measure the layer-wise inference latency and the per-layer output size of VGG-16, ResNet-18, and Darknet-53 on a Raspberry Pi~4 model B running at 1.5GHz with 4 GB system memory. Each block or residual contains several convolutional layers.}
	\label{fig:infertime}
\end{figure*}

The proliferation of mobile devices such as smartphones and smart robotics 
brings a tremendous amount of data generated from users. 
On mobile devices, pervasive data processing applications 
including machine translation~\cite{bahdanau2014neural}, 
object detection~\cite{he2019tracking}, and many others 
process the data and share the data over a communication network. 
The results of these applications should be exceedingly accurate~\cite{he2016dual}. 
As such, deep neural networks (DNNs) become one of the de-facto solutions 
in the data processing applications due to its high accuracy. 
However, DNN inference requires abundant computation resources 
and consumes considerable energy~\cite{sze2020efficient}. 
Therefore, it is not suitable to deploy DNNs on mobile devices 
that have restricted computation power and limited energy. 

One of the popular approaches to tackle this issue is to 
offload the intensive DNN inference to a cloud server~\cite{wang2020convergence}. 
However, this approach involves 
transferring raw data collected by the data processing applications 
to the cloud server through a backbone network, 
which incurs long transmission delays and privacy concerns. 
With the observation that the intermediate result size of a DNN is 
significantly smaller than the raw data size, 
the idea of only transferring the intermediate result to the cloud server comes into existence. 
Recent research employs the \textit{edge computing paradigm} that comprises three computing tiers (i.e., device~\footnote{To avoid ambiguity, we use the term ``device'' to represent the ``device tier of edge computing paradigm'' and the term ``node'' to denote the ``computing device'' throughout this paper.}, edge, cloud) to solve the problem. 
Specifically, previous studies propose to split a DNN into two parts 
according to the processing time of DNN layers 
and the data transmission delay between two layers~\cite{teerapittayanon2017distributed, kang2017neurosurgeon, hu2019dynamic, zhou2019adaptive}. 
Generally, a mobile device collects raw data 
and passes the input to the first DNN partition located at an edge node. 
Next, the edge node processes the input and transmits the intermediate results to a cloud server. 
Finally, the cloud server handles the rest of the DNN inference 
that requires more processing capability. 
The collaborative computation leverages the resources provided by the edge and the cloud,  
reducing DNN inference latency and communication overhead over the network core. 

Nonetheless, the rapid development of hardware makes 
it possible to perform partial DNN inference on mobile devices~\cite{wang2020neural}. 
For instance, the latest smartphone has an octa-core CPU running up to 3.1 GHz 
and a GPU with~$1.37$ TFLOPS~\cite{snapdragon865plus}. 
In this context, current solutions fail to leverage the synergistic
device-edge-cloud computation power~\cite{hu2019dynamic}. 
Besides, prior works require re-partitioning the whole DNN 
to accommodate dynamics of computation resources and network bandwidth~\cite{zhou2019edge}. 
Moreover, in a joint DNN inference pipeline, 
the node with the most processing time becomes the bottleneck of the overall inference.
For an edge node with limited resources compared with a cloud node, 
if it is the bottleneck of the collaborative inference, 
the state-of-the-art does not provide a parallel processing strategy for 
convolutional layers assigned to the edge node without loss of accuracy~\cite{zhao2018deepthings}. 

To address these limitations, in this paper, we present a 
\textbf{d}ynamic \textbf{D}NN \textbf{d}ecomposition system named D$^3$ 
for lossless synergistic inference. 
D$^3$ accelerates DNN inference by leveraging the
synergistic device-edge-cloud computation power without precision loss.
In D$^3$, we employ a regression model that takes computation resources 
and DNN layer configurations as input and estimates the processing time of DNN layers. 
According to the per-layer execution time and the transmission delay between layers, 
a heuristic algorithm, which we refer to as \textbf{h}orizontal \textbf{p}artition \textbf{a}lgorithm (HPA),   
partitions a DNN into three parts, each of which runs on one computing tier. 
In the case of changes in the DNN layer processing time or the transmission delay, 
HPA can partially adjust the DNN segmentation, 
adapting to the changes via local updates. 
At a finer granularity, a \textbf{v}ertical \textbf{s}eparation \textbf{m}odule 
(VSM) further splits a stack of feature maps of convolutional layers into blocks spatially. 
We assign a block of correlated feature maps to an edge node. 
In this way, we can run feature maps of convolutional layers on the edge nodes 
in parallel independently, utilizing the edge resources. 

We implement D$^3$ and evaluate its performance 
based on ImageNet dataset~\cite{deng2009imagenet} using various real-world DNNs 
including AlexNet~\cite{krizhevsky2012imagenet}, VGG-16~\cite{simonyan2014very}, 
ResNet-18~\cite{he2016deep}, Darknet-53~\cite{redmon2018yolov3}, 
and Inception-v4~\cite{szegedy2016inception}. 
Our experimental results show that compared with the state-of-the-art counterparts, 
D$^3$ accelerates the DNN inference time up to 3.4$\times$ and 
reduces the communication overhead up to 3.68$\times$. 

The rest of this paper is organized as follows. 
First, we review related work in section~\ref{sec:related work}. 
Then, we describe the proposed system in section~\ref{sec:proposed system}. 
Section~\ref{sec:implementation} introduces the implementation and the test-bed of D$^3$. 
Extensive comparisons and evaluations between D$^3$ 
and its counterparts follow in section~\ref{sec:evaluation}. 
Finally, we conclude in section~\ref{sec:conclusion}.

\section{related work}
\label{sec:related work}

\begin{figure*}[t]
	\centering
	\includegraphics[width=2\columnwidth]{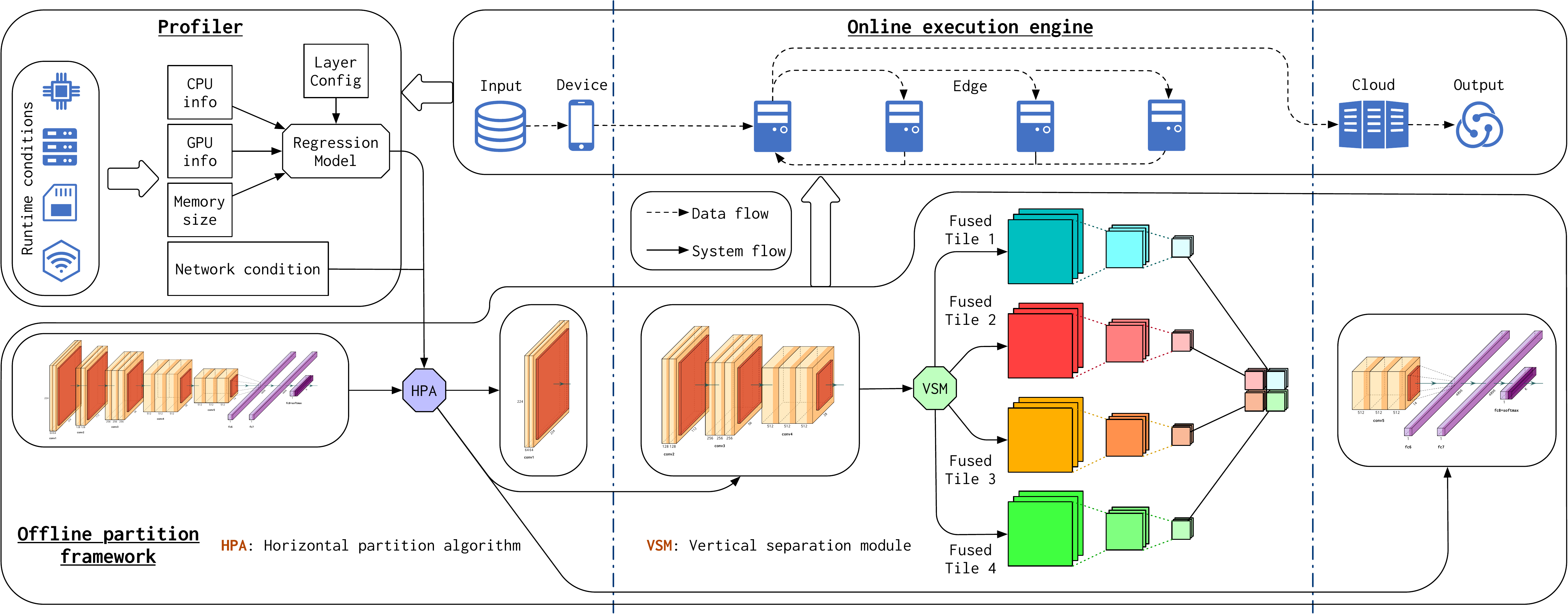}
	\caption{D$^3$ system architecture.}
	\label{fig:d4system}
\end{figure*}

Before delving into the D$^3$ system, 
we first provide an overview of the existing DNN inference acceleration mechanisms. 
The status-quo approaches that are related to our method can be classified into two categories. 
The first group aims to split a DNN model into partitions 
according to per-layer processing time and inter-layer transmission delay. 
The partitions are distributed to multiple computation nodes. 
We denote such approaches as horizontal partition. 
The second group, which we refer to as vertical separation, 
focuses on dividing the feature map of a convolutional layer spatially to multiple tiles 
hosted by multiple computation nodes. 

To accelerate DNN inference without sacrificing accuracy, 
a few previous works focus on the DNN horizontal partition.
Neurosurgeon~\cite{kang2017neurosurgeon} offloads computation 
from resource-constrained mobile devices to cloud servers. 
It splits a DNN of chain topology at a layer granularity
to minimize processing latency and energy consumption. 
IONN~\cite{jeong2018ionn} models a chain topology DNN as 
an auxiliary DAG and finds the optimal incremental offloading with the shortest path algorithm on the DAG. 
DADS~\cite{hu2019dynamic} extends the layer-wise partition 
to multi-branch DNNs represented by DAGs and exploits 
the min-cut algorithm to find the optimal cutting points.
It employs edge computing and deploys DNN layers to an edge node and a cloud server. 
However, DADS cannot generalize the min-cut approach to separate a DNN into more than two parts. 
DINA~\cite{mohammed2020distributed} presents 
an adaptive partition algorithm to divide DNN layers into pieces that can be smaller than a layer. 
It offloads the pieces to fog nodes based on a swap-matching algorithm. 
DDNN~\cite{teerapittayanon2017distributed} introduces the idea of early-exit 
that sacrifices accuracy in exchange for reducing inference delay and communication overhead. 
However, this requires a specific training process to ensure the prediction confidence. 
Thus, it is not suitable for accelerating the lossless inference of the trained DNN discussed in this paper. 
Edgent~\cite{li2019edge} combines the chain topology DNN segmentation and the early-exit 
method and performs collaborative device-edge DNN inference. 
SPINN~\cite{laskaridis2020spinn} applies the early-exit 
strategy to progressive inference and designs a scheduler to flexibly handle service-level agreements.

The convolutional layer, which is commonly deployed in current data processing applications, 
is one of the most resource-intensive components of a DNN~\cite{cao2019intelligent}.  
Fig.~\ref{fig:infertime} illustrates the per-layer inference latency 
and the inter-layer output size of three widely used DNNs. 
We notice that some convolutional layers require substantial computation resources. 
To explore inference parallelism for convolutional layers, 
prior works aim to separate feature maps spatially. 
MoDNN~\cite{mao2017modnn} proposes layer-wise parallelism 
that divides one feature map of a convolutional layer into pieces. 
Each computation node executes a part of the feature map and generates an output.
A host node gathers the output and re-partitions the feature map 
for parallel processing of the next convolutional layer. 
This procedure results in significant communication overhead.
DeepThings~\cite{zhao2018deepthings} removes the communication overhead 
by introducing a fused tile partition (FTP) 
that slices a stack of correlated feature maps spatially and distributes the stack to a computation node. 
AOFL~\cite{zhou2019adaptive} extends the idea of fused tiles 
and offers an algorithm to find the optimal tile partition according to 
resources of each computation node.

\section{proposed system}
\label{sec:proposed system}
This section formally describes D$^3$, a system that dynamically decomposes
a DNN to segments for collaborative inference over device, edge, and cloud with no precision loss.
We first discuss the edge computing framework, 
based on which our system is designed.
We then give an overview of D$^3$, followed by the modeling of our system. 
A regression model is provided to estimate the per-layer execution latency of a DNN. 
Next, we propose our horizontal partition algorithm that splits a DNN into three parts. 
Finally, we describe the vertical separation module that enables parallel convolutional layer inference.

\subsection{Edge Computing Paradigm}
The edge computing architecture comprises three tiers that are 
device, edge, and cloud. Each layer consists of multiple computation nodes.
From a computation perspective, 
edge nodes provide high computation capabilities compared with device nodes. 
Even so, as the edge nodes are often heterogeneous, 
the computation power of the edge nodes is still bounded compared with the cloud servers.
Consequently, we say that the computation resources are gradually 
increasing over device, edge, and cloud~\cite{abbas2017mobile}.
From a communication perspective, since we deploy edge nodes close to the data source, 
network bandwidth maintains high between device and edge. 
Nonetheless, as device nodes and edge nodes connect 
to cloud servers through a backbone network (e.g., the Internet backbone), 
the bandwidth to the cloud nodes remains limited~\cite{wu2019machine}. 
Compared with the transmission delay between the computing tiers, 
the in-memory transmission delay of a node or the transmission delay 
between two computation nodes within the same computing tier is negligible. 
To simplify the problem, without loss of generality, 
we assume that the transmission delay within each computing tier is infinitesimal. 

\subsection{System Overview}
We show the D$^3$ system architecture in Fig.~\ref{fig:d4system}. 
D$^3$ comprises a profiler, an offline partition framework, and an online execution engine. 
The profiler collects the operating conditions of 
computation nodes at device, edge, and cloud as well as the network status between tiers. 
A regression model takes the computation statistics as input and  
estimates the inference time of each DNN layer processed at different computing tiers. 
The offline partition framework comprises two components. 
A horizontal partition algorithm splits a DNN model into three parts 
according to the per-layer inference time and the transmission delay between layers. 
D$^3$ distributes the three DNN segments to three computation nodes located at 
device, edge, and cloud respectively. 
Considered that the computation resource of a single edge node is limited, 
if convolutional layers locate at an edge node, to further accelerate the inference, 
a vertical separation module divides a sequence of correlated feature maps 
to multiple feature map stacks, 
each of which is processed on an edge node independently. 
Fig.~\ref{fig:d4system} illustrates the condition 
that D$^3$ breaks a sequence of feature maps into four stacks.
The online execution engine orchestrates the distributed and parallelism processing, 
handling communication among partitions. 
For the online DNN inference, a computation node located at the device tier collects input 
and processes the input through its allocated DNN partition. 
The device node passes the output to an edge node, 
which splits the feature maps into multiple tiles 
and diffuses the tiles to other available edge nodes.  
After the processing, an edge node gathers and integrates the results,
and passes it to a cloud server for further execution. The cloud server produces the final output. 

\subsection{System Model}
\begin{figure}
	\centering
	\begin{subfigure}[t]{.22\textwidth}
		\centering
		\includegraphics[width=\textwidth]{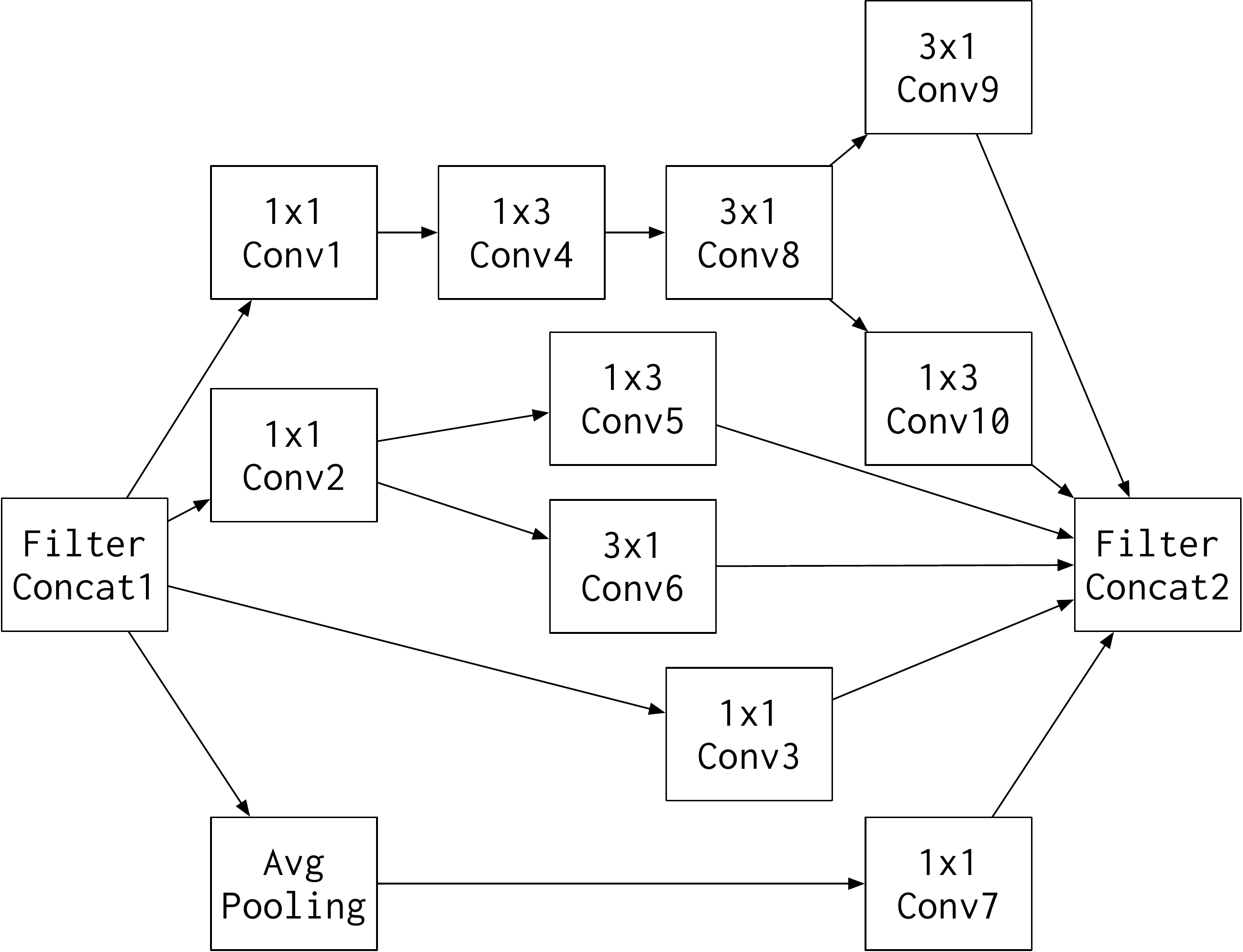}
		\caption{The grid module.}
		\label{fig:inception}
	\end{subfigure}
	\hfill
	\begin{subfigure}[t]{.245\textwidth}
		\centering
		\includegraphics[width=\textwidth]{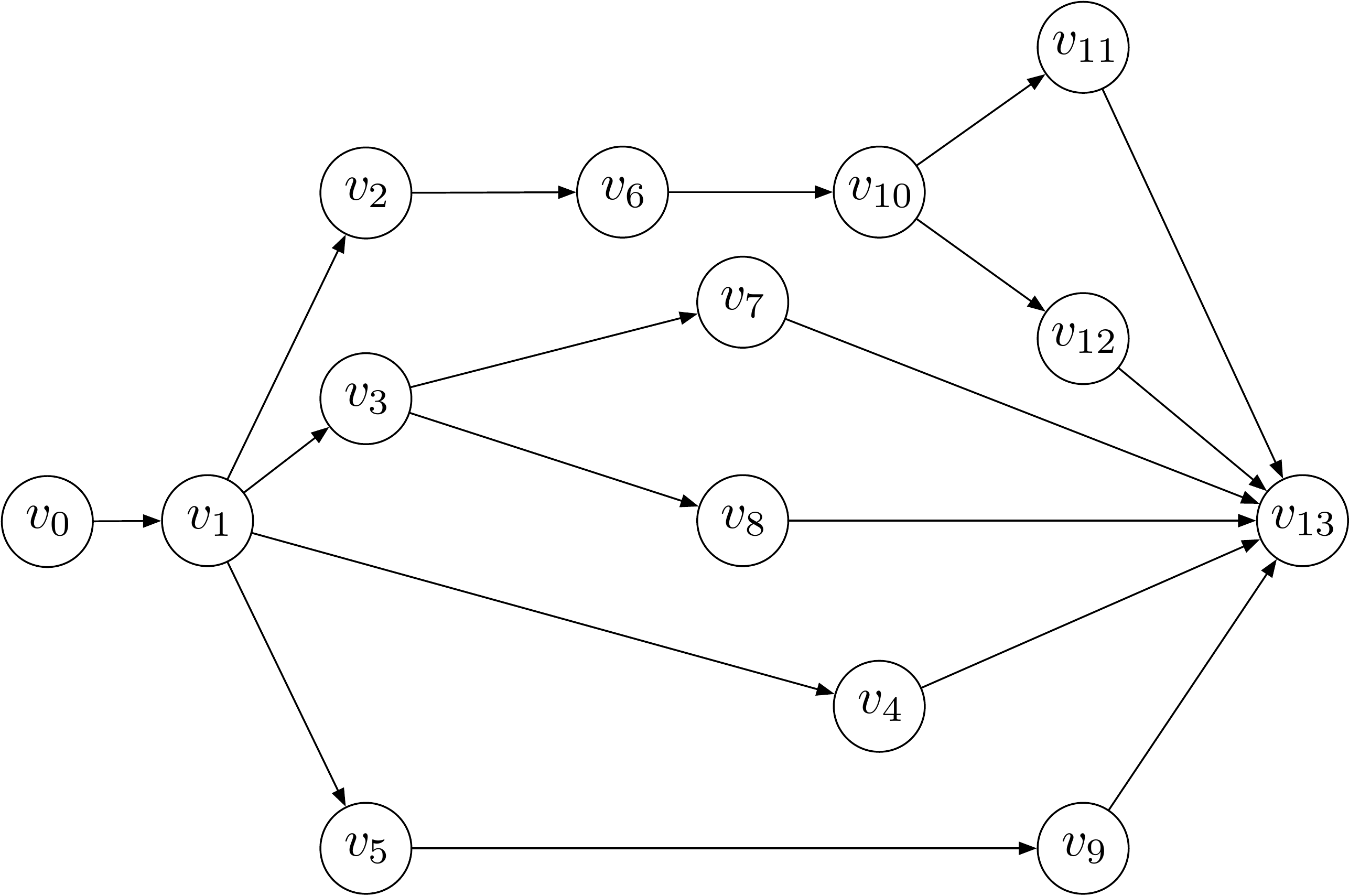}
		\caption{The DAG representation.}
		\label{fig:inceptiondag}
	\end{subfigure}
	\caption{The grid module of the Inception-v4 network and its DAG representation.}
	\label{fig:inceptionandinceptiondag}
\end{figure}

The smallest computation unit in a DNN model is a mathematical operator 
such as matrix multiplication and convolution. 
A DNN layer comprises one or multiple mathematical operators.
DNN layers constitute a computation graph that describes the DNN inference process. 
We describe that the computation graph can be modeled as a Directed Acyclic Graph (DAG).

Given a DNN model, we refer to the DNN layers in the model 
as a set of vertices $\{ v_1, v_2, \dots, v_n\}$ in a graph 
where $n$ is the number of layers in the model 
and $v_i$ corresponds to the $i$-th DNN layer in the model. 
To facilitate our algorithm, we introduce a virtual input vertex in the graph 
to indicate the starting point of a DNN and represent it using $v_0$.
For any two vertices $v_i$ and $v_j$ in the graph, 
we introduce a directed link~$(v_i, v_j)$ in the graph if and only if layer~$i$ is computed 
before layer~$j$ and the output of layer~$i$ serves as the input of layer~$j$. 
The graph is a DAG.

With above settings, we denote the given DNN model by the following DAG 
\begin{equation}
\label{eqn:dag}
	\mathcal{G} = (\mathcal{V}, \mathcal{L})
\end{equation}
where $\mathcal V = \{v_0, v_1, \dots, v_n\}$ and $\mathcal L\subset \mathcal V\times \mathcal V$ is the set of directed links in the DAG. 
As an example, Inception-v4 network~\cite{szegedy2016inception} 
is a multi-branch DNN that is depicted as a DAG. 
Fig.~\ref{fig:inception} shows the grid module of the 
Inception-v4 network and Fig.~\ref{fig:inceptiondag} 
illustrates its DAG representation. 

Next, we define relations between two vertices in $\mathcal{G}$. 
If there is a directed link from $v_i$ to $v_j$, 
then $v_i$ is a \textit{direct predecessor} of $v_j$ 
and $v_j$ is a \textit{direct successor} of $v_i$.  
For $v_i \in \mathcal{V}$, we refer to the set of its direct predecessors 
as $\mathcal{V}_i^p$ where $\mathcal{V}_i^p \subseteq \mathcal{V}$. 

The processing time of a DNN layer varies when the layer 
locates at different computing tiers. 
We use $d, e, c$ to express device, edge, and cloud tier respectively. 
For $v_i \in \mathcal{V}$, let $l_{i} \in \{d,e,c\}$ indicate 
the tier where vertex~$v_i$ is processed. 
Considered that the data of a DNN model flows from a device node, across an edge node, to a cloud node, 
to assist our algorithm, we define an order $d \succ e \succ c$, 
from which the $l_i$ of $v_i$ is selected.
We employ $t_i^d, t_i^e$, and $t_i^c$ to signify the processing time of 
$v_i$ at $d, e, c$ accordingly. Typically, we have $t_i^d > t_i^e > t_i^c$.  
$\forall v_i \in \mathcal{V}$, we assign $\mathcal{T}_{v_i} = \{t_i^d, t_i^e, t_i^c\}$ 
as vertex weight of $v_i$. 
For a directed link~$(v_i, v_j)$, if $v_i$ and $v_j$ are at distinct computing tiers, 
there is a transmission delay for data output from $v_i$ to $v_j$. 
We refer to the transmission delay from $v_i$ to $v_j$ over device and edge, 
edge and cloud, device and cloud 
as $t_{ij}^{[d, e]}$, $t_{ij}^{[e, c]}$, and $t_{ij}^{[d, c]}$ respectively. 
We assume that the two-way transmission delays between the two tiers are the same, which means 
$t_{ij}^{[d, e]}=t_{ij}^{[e, d]}$, $t_{ij}^{[e, c]}=t_{ij}^{[c, e]}$, 
and $t_{ij}^{[d, c]}=t_{ij}^{[c, d]}$. 
Generally, we have $t_{ij}^{[d, e]} < t_{ij}^{[e, c]} \leq t_{ij}^{[d, c]}$. 
For the condition that $v_i$ and $v_j$ locate at the same tier, 
the transmission delay is approximately~$0$. 
Therefore, $\forall (v_i, v_j) \in \mathcal{L}$,  
we allocate weight $\mathcal{T}_{(v_i, v_j)} = \{t_{ij}^{[d, e]}, t_{ij}^{[e, c]}, t_{ij}^{[d, c]}, 0\}$ 
to the directed link~$(v_i, v_j)$.

\subsection{Latency Estimation}
\begin{figure}[t]
	\centering
	\begin{subfigure}[t]{0.24\textwidth}
		\centering
		\includegraphics[width=\textwidth]{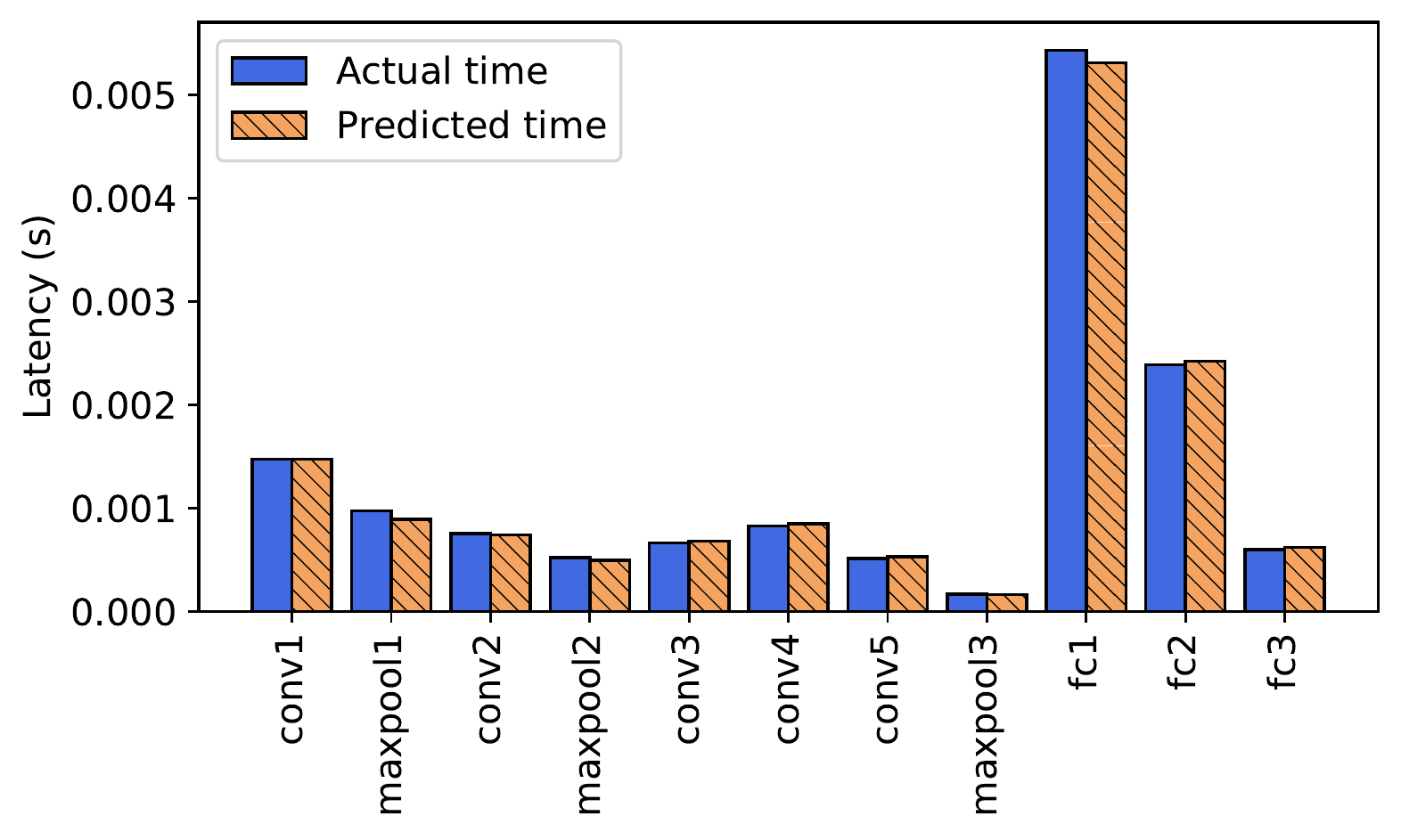}
		\caption{CPU}
		\label{fig:regressionvgg16}
	\end{subfigure}
	\hfill
	\begin{subfigure}[t]{0.24\textwidth}
		\centering
		\includegraphics[width=\textwidth]{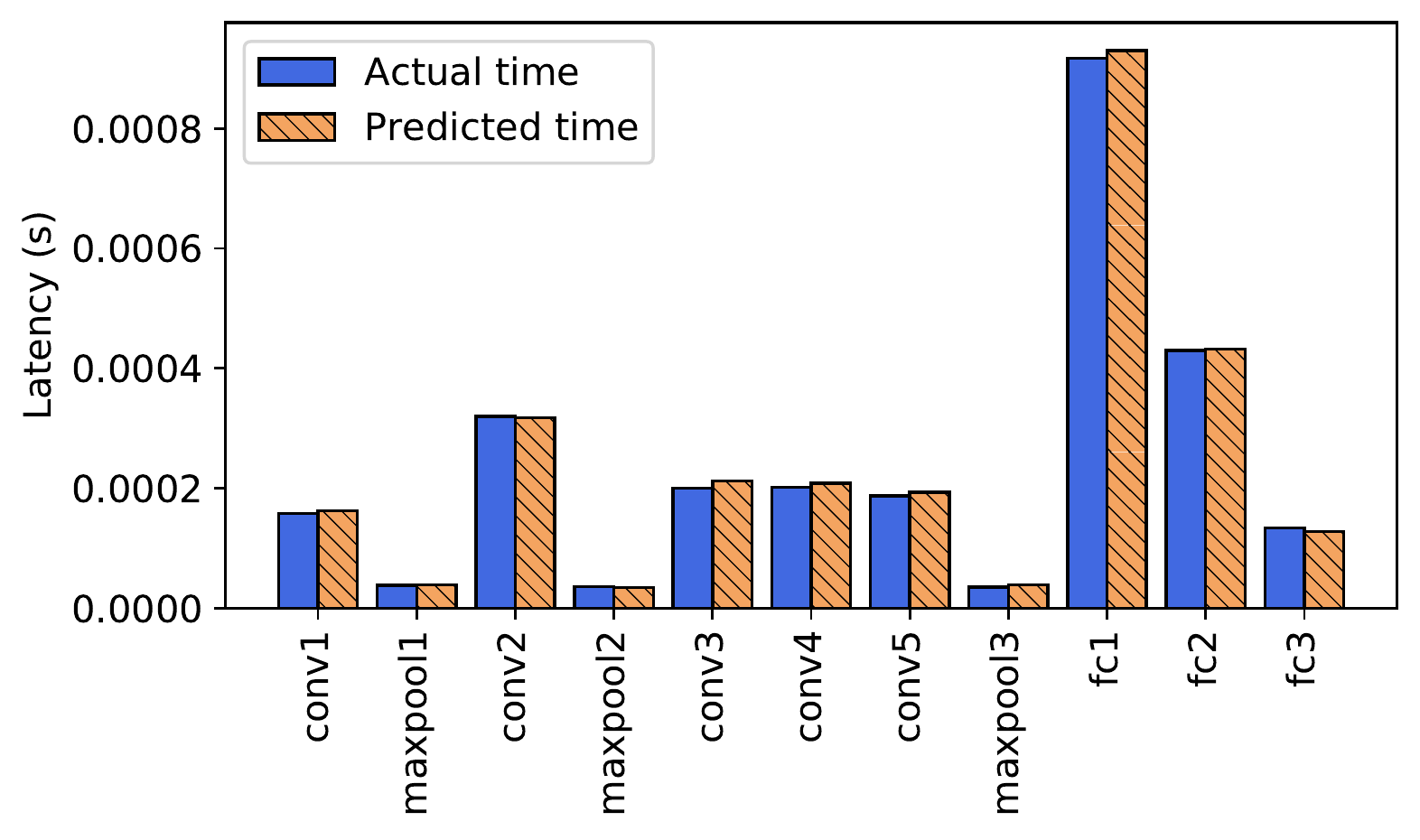}
		\caption{GPU}
		\label{fig:regressionresnet18}
	\end{subfigure}
	\caption{With the input size of $3 \times 224 \times 224$, the per-layer actual processing time and the predicted processing time of AlexNet on Intel Core i7-8700 CPU~\cite{intel8700cpu} and NVIDIA GeForce RTX 2080 Ti GPU~\cite{nvidia2080gpu}.}
	\label{fig:regressiontime}
\end{figure}

To determine the vertex weight $\mathcal{T}_{v_i}$ for all $v_i \in \mathcal{V}$ , 
one of the methods is to process DNN layers on the spot. 
However, executing every DNN layer at device, 
edge, and cloud nodes is impractical and time-consuming. 
Moreover, the computation resources of the nodes vary timely, 
thus $t_i^{l_i}$ of the actual measurement is inaccurate 
to resolve the optimal tier assignment of every DNN layer. 
Hence, such an approach is not feasible. 
To solve this problem, we employ a regression model 
that considers computation resources and DNN layer configurations to 
estimate the processing time of each DNN layer. 
We refer to the computation resources as the computation capabilities defined by CPU, GPU, 
and memory size. The DNN layer configurations include DNN layer types (e.g., convolution, ReLU, etc.) 
and DNN layer hyper-parameters (e.g., stride, input size, etc.). 
In Fig.~\ref{fig:regressiontime}, our regression model demonstrates that the actual processing time 
and the predicted processing time of each AlexNet~\cite{krizhevsky2012imagenet} layer are similar. 
We compute the link weight $\mathcal{T}_{(v_i, v_j)}$ between $v_i$ and $v_j$ 
by using the output data size of $v_i$ divided by
the network bandwidth between $l_i$ and $l_j$, which is monitored by the profiler. 
Note that if $l_i=l_j$, the transmission latency between $v_i$ and $v_j$ is $0$.

\subsection{Horizontal Partition Algorithm}

\begin{figure}[t]
	\centering
	\begin{subfigure}[t]{0.24\textwidth}
		\centering
		\includegraphics[width=\columnwidth]{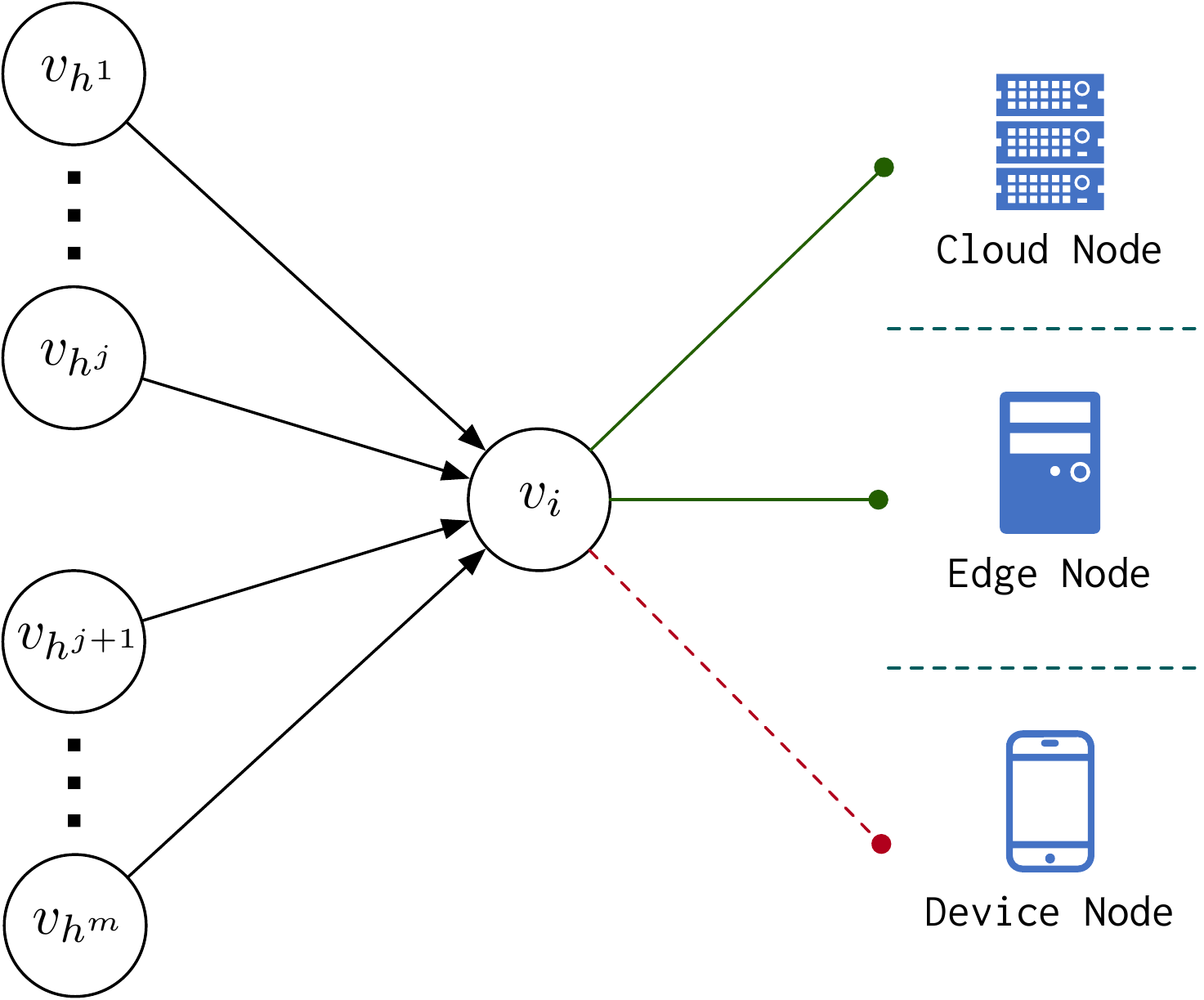}
		\caption{}
		\label{fig:proof1}
	\end{subfigure}
	\hfill
	\begin{subfigure}[t]{0.24\textwidth}
		\centering
		\includegraphics[width=\columnwidth]{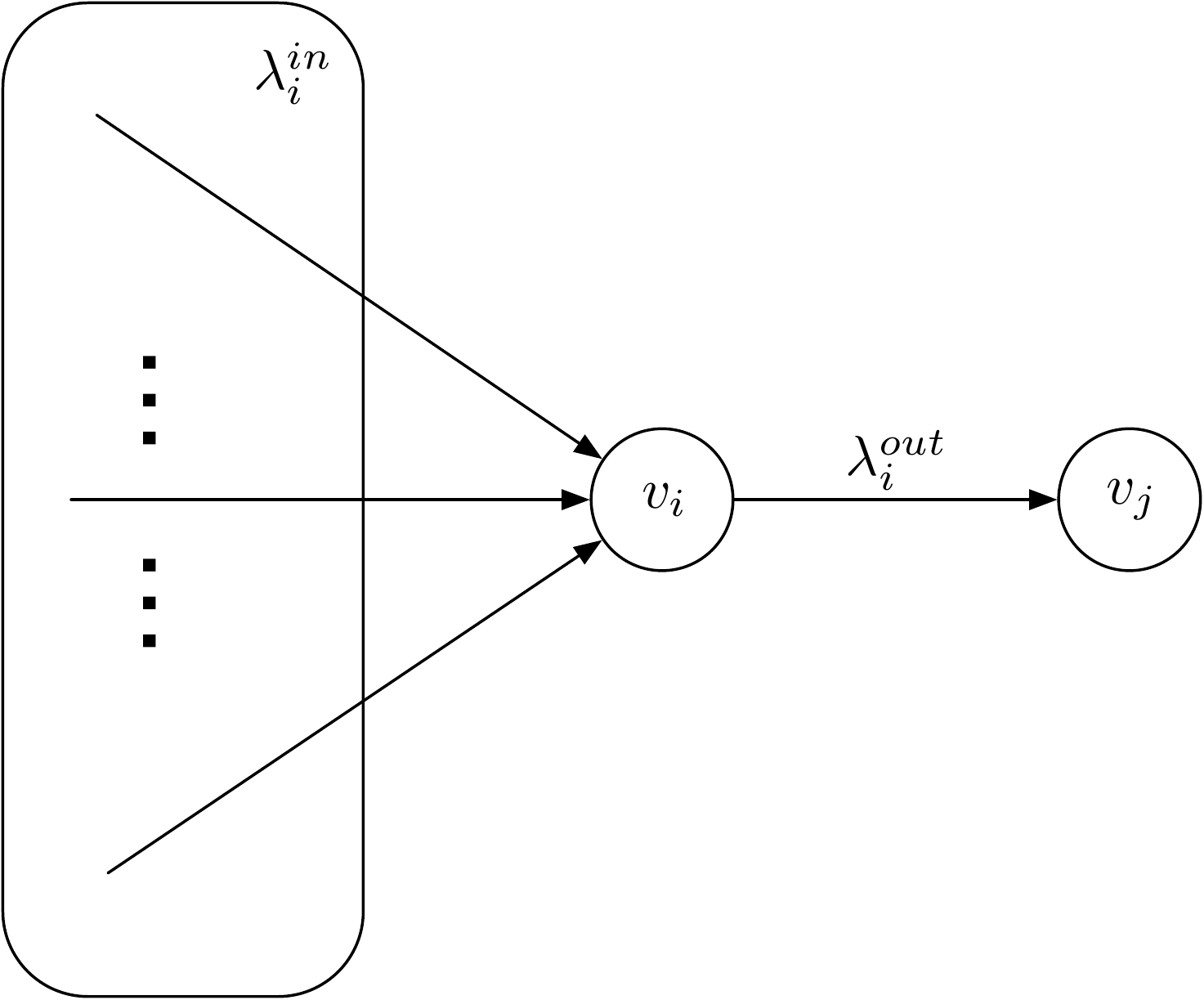}
		\caption{}
		\label{fig:proof2}
	\end{subfigure}
	\label{fig:proof}
	\caption{The potential tier of a vertex depends on its direct predecessors. The optimal tier of a vertex depends on its input and output size.}
\end{figure}

Our main objective is to minimize the DNN inference time by leveraging the 
collaborative computation provided by three computing tiers. 
To achieve this goal, we propose to split the DNN model into three parts 
that are executed over device, edge, and cloud.
Given a DNN layer, the total latency includes 
the DNN layer processing time and its input data transmission delay.
Theoretically, for $v_i \in \mathcal{V}$ 
and a set of its direct predecessors~$\mathcal{V}_i^p$, 
the total latency is~$t_i^{l_i} + \sum_{v_h \in \mathcal{V}_i^p} t_{hi}^{[l_h, l_i]}$. 
The optimal tier of a vertex~$v_i$ is decided by comparing the latencies 
of attaching~$v_i$ to different computing tiers. 
We regard the tier that is possible to yield the smallest latency as a
\textit{potential tier}. A set of potential tiers of $v_i$ is denoted by $\Gamma_i$, 
where $\Gamma_i \subset \{d, e, c\}$. 

To this end, mathematically, our major goal is to split the DAG in~\eqref{eqn:dag} into 
three sub-graphs by assigning each vertex to one of the three computing tiers 
and minimizing the total latency 
\begin{equation*}
\label{eqn:totallatency}
	\Theta(v_0, v_1, \cdots, v_n) = \sum_{v_i\in \mathcal{V}} t_i^{l_i} + \sum_{(v_i, v_j)\in \mathcal{L}} t_{ij}^{[l_i, l_j]}.
\end{equation*}
Each sub-graph contains a subset of $\mathcal V$ that have the same optimal tier.
However, partitioning a DAG according to multiple vertex weights 
and link weights falls into an NP-hard problem~\cite{hartmanis1982computers, nossack2014branch}. 
Thus, heuristics are essential to partition the DAG. 

With the DAG in~\eqref{eqn:dag} representing a DNN, 
the \textbf{h}orizontal \textbf{p}artition \textbf{a}lgorithm (HPA) first computes the 
longest distance from $v_{0}$ to $v_{i}$, denoted by $\delta(v_i)$, $\forall v_i \in \mathcal{V}$.
We get the longest distance with the dynamic programming method 
mentioned in~\cite{sedgewick2011algorithms}. 
The time complexity of the method is $\mathcal{O}(|\mathcal{V}|+|\mathcal{L}|)$. 
Subsequently, we define a partition of $\mathcal V$ by 
\begin{equation*}
	\mathcal{Z}_q := \{v_i: \delta(v_i)=q, v_i\in \mathcal{V}\}, \quad q = 0, 1, \dots, n-1,
\end{equation*}
and HPA arranges $v_i$ to the graph layer $\mathcal{Z}_q$. 
To illustrate, in Fig.~\ref{fig:inceptiondag}, 
HPA assigns the vertices to~$7$ graph layers that are 
$\mathcal{Z}_0=\{v_0\}$, 
$\mathcal{Z}_1=\{v_1\}$, $\mathcal{Z}_2=\{v_2, v_3, v_4, v_5\}$, 
$\mathcal{Z}_3=\{v_6, v_7, v_8, v_9\}$, $\mathcal{Z}_4=\{v_{10}\}$, 
$\mathcal{Z}_5=\{v_{11}, v_{12}\}$, $\mathcal{Z}_6=\{v_{13}\}$. 
In graph layer $\mathcal{Z}_q$ where $q \in \mathbb{Z}^+$, 
to decide the optimal tier of each vertex in $\mathcal{Z}_q$, 
HPA must assign all vertices in $\mathcal{Z}_{q-1}$ to their optimal tiers.
By mathematical induction, we know that HPA starts from graph layer $\mathcal{Z}_0$ 
and calculates the optimal tiers layer by layer. 

For a vertex~$v_i$, its potential tiers 
and the tiers of its direct predecessors 
have the relation described in the Proposition~\ref{prop:potential}. 
As an example, if the direct predecessors of vertex~$v_i$ are assigned to an edge node, 
then the potential tiers for $v_i$ are edge and cloud. 
\begin{proposition}
    \label{prop:potential}
	For a vertex~$v_i \in \mathcal{V}$ and a set of its direct predecessors 
	$\mathcal{V}_i^p=\{v_{h^1}, v_{h^2}, \cdots, v_{h^m}\}$, 
	the potential tier $l_i$ of $v_i$ has the subsequent relation with 
	the tiers of its direct predecessors, i.e.,
	$\max\{l_{h^1}, l_{h^2}, \cdots, l_{h^m}\}\succeq l_i$.
\end{proposition}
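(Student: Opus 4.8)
The plan is to read the proposition as a \emph{dominance} (exchange) statement about the local latency that the horizontal partition algorithm actually minimizes when it decides the tier of $v_i$. Writing $L := \max\{l_{h^1}, l_{h^2}, \dots, l_{h^m}\}$ for the most device-ward tier occupied by any direct predecessor, I want to show that no tier strictly above $L$ in the order $d \succ e \succ c$ can be the minimizer of the input-side latency
\begin{equation*}
	C(l_i) = t_i^{l_i} + \sum_{v_h \in \mathcal{V}_i^p} t_{hi}^{[l_h, l_i]},
\end{equation*}
which is exactly the quantity defining a potential tier. Since every potential tier is by definition a minimizer of $C$, showing that each candidate with $l_i \succ L$ is strictly dominated by the candidate $l_i = L$ immediately yields $L \succeq l_i$ for every potential tier, i.e.\ the assertion.

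First I would fix the predecessor assignment and treat $C$ as a function of the single variable $l_i \in \{d, e, c\}$, reducing the problem to comparing at most three numbers, and then split into three cases according to the value of $L$. The case $L = d$ is immediate, since $d$ is the top of the order and $d \succeq l_i$ holds for every $l_i$. For $L = e$ (at least one predecessor at edge, none at device) I would compare $C(d)$ against $C(e)$ term by term: each edge predecessor contributes $t_{hi}^{[d,e]} > 0$ to $C(d)$ but $0$ to $C(e)$, each cloud predecessor contributes $t_{hi}^{[d,c]} \ge t_{hi}^{[e,c]}$ by the assumed ordering $t_{ij}^{[d,e]} < t_{ij}^{[e,c]} \le t_{ij}^{[d,c]}$, and the processing cost obeys $t_i^d > t_i^e$; hence $C(e) < C(d)$ and $d$ cannot be a potential tier. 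For $L = c$ (all predecessors at cloud) I would observe $C(c) = t_i^c$ while $C(e)$ and $C(d)$ carry nonnegative transmission terms on top of the larger processing costs $t_i^e, t_i^d$, so $t_i^c < t_i^e < t_i^d$ forces $C(c)$ to be the unique minimum, ruling out both $e$ and $d$.

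The argument leans on the monotonicity assumptions stated earlier — $t_i^d > t_i^e > t_i^c$ for processing and $t_{ij}^{[d,e]} < t_{ij}^{[e,c]} \le t_{ij}^{[d,c]}$ for transmission, together with the zero in-tier delay — so I would make explicit that these hypotheses are doing the work; without them the exchange comparison could tie or reverse. I expect the only real subtlety, rather than a genuine obstacle, to be the bookkeeping for multiple predecessors sitting on different tiers: the comparison must survive summing over a heterogeneous set $\mathcal{V}_i^p$, so I would verify that the inequality is preserved term by term (each summand weakly favoring the lower tier, with at least one strict) instead of only on average. Once that per-term domination is established, the sum inherits the strict inequality and the induction-friendly conclusion $\max\{l_{h^1}, \dots, l_{h^m}\} \succeq l_i$ follows.
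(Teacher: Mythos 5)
Your proof is correct and takes essentially the same approach as the paper: a dominance comparison of the input-side latency $t_i^{l_i} + \sum_{v_h \in \mathcal{V}_i^p} t_{hi}^{[l_h, l_i]}$ at a tier above $\max\{l_{h^1},\dots,l_{h^m}\}$ versus at that max tier, using the assumed monotonicity of processing times and transmission delays. The only difference is thoroughness: the paper argues by contradiction on one representative configuration (predecessors split between edge and cloud, candidate tier $d$), whereas you carry out the full case analysis over $L \in \{d,e,c\}$ with term-by-term comparisons, which is a strictly more complete write-up of the same idea.
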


\begin{proof}
	We prove the proposition~\ref{prop:potential} by contradiction. 
	Consider a vertex~$v_i$ with multiple direct predecessors 
	$\{v_{h^1}, \cdots, v_{h^j}, v_{h^{j+1}}, \cdots, v_{h^m}\}$ shown in Fig.~\ref{fig:proof1}. 
	We assume that the DNN layers represented by $\{v_{h^1}, \cdots, v_{h^j}\}$ 
	are assigned to a cloud node 
	and the DNN layers denoted by $\{v_{h^{j+1}}, \cdots, v_{h^m}\}$ are assigned to an edge node. 
	Suppose that we assign the DNN layer described by $v_i$ to a device node, 
	which means \mbox{$l_i\succ\max\{l_{h^1}, l_{h^2}, \cdots, l_{h^m}\}$}. 
	The total latency of $v_i$ is $t_i^{d} + \sum_{x=1}^{j} t_{h^xi}^{[c, d]} + \sum_{y=j+1}^{m} t_{h^yi}^{[e, d]}$, 
	which is larger than the delay $t_i^{e} + \sum_{x=1}^{j} t_{h^xi}^{[c, e]}$ 
	when the DNN layer signified by $v_i$ is assigned to an edge node. 
	Assigning $v_i$ to the device tier cannot yield the smallest latency. By contradiction, 
	we prove \mbox{$\max \{l_{h^1}, l_{h^2}, \cdots, l_{h^m}\}\succeq l_i$}.
\end{proof}

\begin{table}[t]
	\centering
	\caption{The total latencies of processing $v_i$ and $v_j$.}
	\setlength{\extrarowheight}{2pt}
	\begin{tabular}{|| C{5 em} | C{5 em} | C{14 em}||}
		\hline
		\textbf{Location of $v_i$} & \textbf{Location of $v_j$} & \textbf{Total Latency} \\ 
		\hline
		device & device & $t_i^d + t_j^d$  \\ 
		\hline
		device & edge & $t_i^d + t_j^e + \lambda_i^{out}/\sigma_{de}$  \\
		\hline
		edge & edge & $t_i^e + t_j^e + \lambda_i^{in}/\sigma_{de}$ \\
		\hline
		edge & cloud & $t_i^e + t_j^c + \lambda_i^{in}/\sigma_{de} + \lambda_i^{out}/\sigma_{ec}$ \\
		\hline
		cloud & cloud & $t_i^c + t_j^c + \lambda_i^{in}/\sigma_{dc}$ \\
		\hline
		device & cloud & $t_i^d + t_j^c + \lambda_i^{out}/\sigma_{dc}$ \\
		\hline
	\end{tabular}
    \label{tab:optimaltier}
\end{table}
 
We now derive the optimal tier selection strategy for a vertex~$v_i \in \mathcal{V}$. 
Intuitively, for a single vertex $v_i \in \mathcal{V}$, the optimal tier $l_i^{opt}$ is calculated as: 
\begin{equation}
\label{eqn:optimaltier}
l_i^{opt}=\mathop{\arg\min}_{l_i \in \Gamma_i} (t_i^{l_i} + \sum_{v_h \in \mathcal{V}_i^p} t_{hi}^{[l_h, l_i]}).
\end{equation}
Specially, $l_0^{opt}=d$ for the virtual input vertex~$v_0$.
However, Equation~\eqref{eqn:optimaltier} only selects the local optimal location for $v_i$. 
To further optimize the selection, 
we take the input and output size of a DNN layer into consideration. 
Assuming that a vertex~$v_i$ has multiple direct successors, 
if we place the DNN layers represented by these successors 
on the same computation node and give them the same input, 
we call the successor representing the DNN layer with the longest processing time 
as the \textit{largest direct successor} of $v_i$.
Fig.~\ref{fig:proof2} depicts two vertices $v_i \in \mathcal{V}$ and $v_j \in \mathcal{V}$, 
where $v_j$ is the largest direct successor of $v_i$ and $\Gamma_i=\{d, e, c\}$. 
The total input size of $v_i$ is $\lambda_i^{in}$, and its output size is $\lambda_i^{out}$. 
The network bandwidth between a device node and an edge node, an edge node and a cloud node, 
a device node and a cloud node are $\sigma_{de}, \sigma_{ec}, \sigma_{dc}$ respectively. 
We list the total latencies of allocating the DNN layers indicated by $v_i$ and $v_j$ to 
computation nodes at different tiers in TABLE~\ref{tab:optimaltier}. 
The inputs of $v_i$ are from the device tier.
Heuristically, HPA selects the optimal tier of $v_i$ via the following mechanism. 
On one hand, if $\lambda_i^{in} > \lambda_i^{out}$, $v_i$'s optimal tier $l_i^{opt}$ 
is computed via Equation~\eqref{eqn:optimaltier}. 
On the other hand, if $\lambda_i^{in} \leq \lambda_i^{out}$, 
HPA chooses the largest direct successor of $v_i$
and computes the total latencies of processing $v_i$ 
and its largest direct successor as in TABLE~\ref{tab:optimaltier}. 
According to the smallest value among these total latencies, HPA selects the optimal tier of $v_i$. 

\begin{figure}[t]
	\centering
	\includegraphics[width=0.2\textwidth]{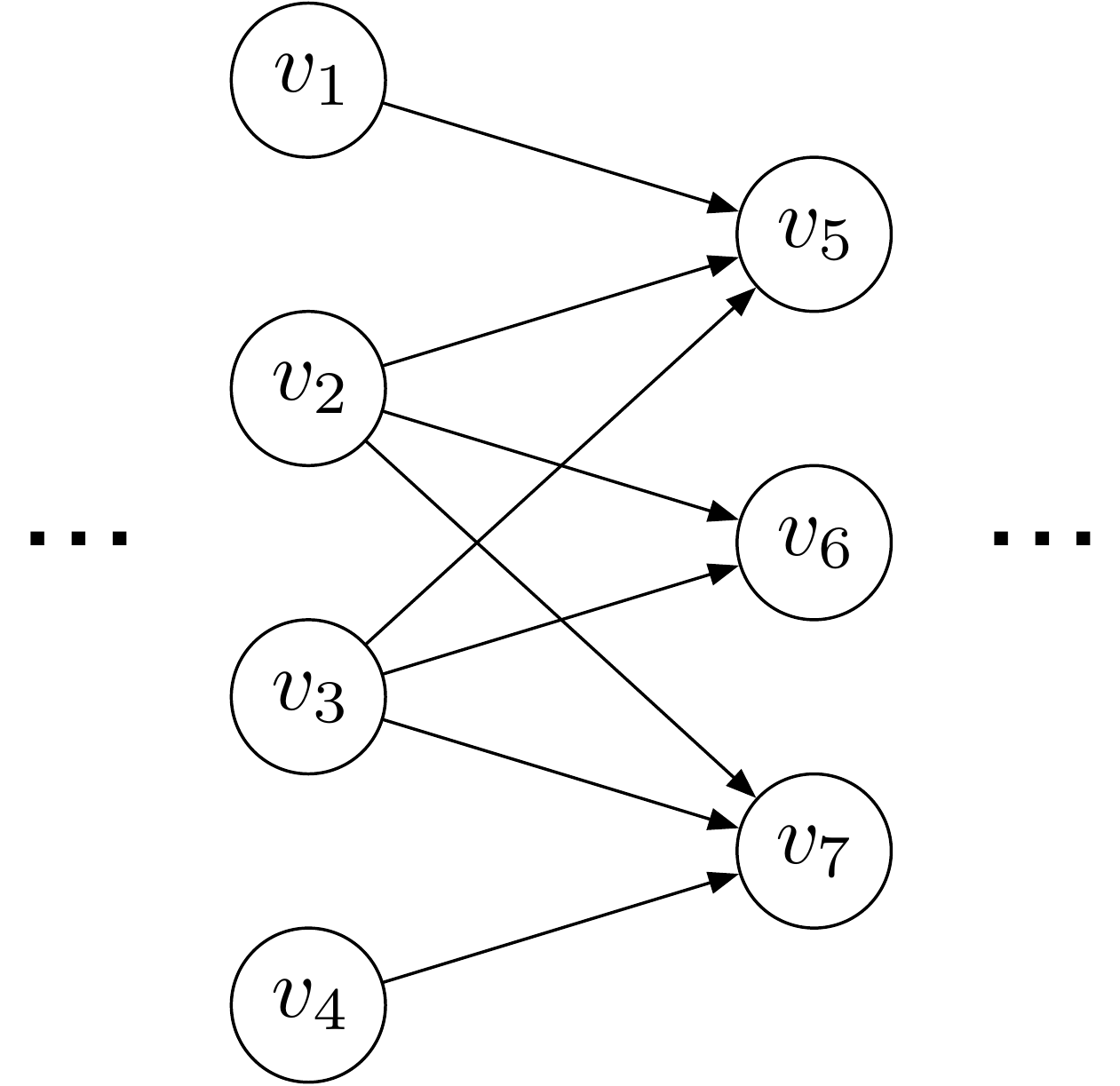}
	\caption{Illustration of SIS relation between vertices.}
	\label{fig:sisexa}
\end{figure}

Given a vertex~$v_i \in \mathcal{Z}_q$, 
we refer to a vertex~$v_j$ as the \textit{\textbf{s}ubset \textbf{i}nput \textbf{s}ibling} (SIS) 
vertex of $v_i$ if $\mathcal{V}_j^p \subset \mathcal{V}_i^p$. 
We clarify the SIS relation with an example in Fig.~\ref{fig:sisexa} 
where $7$ vertices are connected via directed links. 
In the example, $v_6$ is the SIS vertex of $v_5$ since $\mathcal{V}_6^p \subset \mathcal{V}_5^p$, 
whereas $v_7$ is not the SIS vertex of $v_5$ 
in that $\mathcal{V}_7^p \not\subset \mathcal{V}_5^p$.
After deciding the optimal tier for all vertices in layer $\mathcal{Z}_q$, 
for each $v_i \in \mathcal{Z}_q$, if the optimal tier of $v_i$'s SIS vertex~$v_j$ is before 
$l_i^{opt}$ (i.e., $l_j^{opt} \succ l_i^{opt}$), 
HPA updates the optimal tier of the SIS vertex to $l_i^{opt}$. 
We regard this approach as \textit{SIS update} that follows the Proposition~\ref{prop:sisvertexupdate}. 
%In particular, for a vertex, if the optimal tiers of its direct predecessors are all $c$ (i.e., cloud), 
%the optimal tier of the vertex is also $c$. 
%HPA stops when it determines the optimal tiers for all the vertices in~$\mathcal{G}$. 

\begin{proposition}
    \label{prop:sisvertexupdate}
    SIS update optimizes the processing delay and the transmission latency of a SIS vertex.
\end{proposition}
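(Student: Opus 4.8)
The plan is to read ``optimizes'' as ``does not increase, and in the non-degenerate case strictly reduces,'' and to establish this separately for each of the two latency terms attached to the SIS vertex $v_j$: its processing time $t_j^{l_j}$, and the ingress transmission latency $\sum_{v_h \in \mathcal{V}_j^p} t_{hj}^{[l_h, l_j]}$ needed to deliver its inputs. Since every predecessor in $\mathcal{V}_j^p$ lies in a graph layer preceding $\mathcal{Z}_q$, the layer-by-layer order guarantees that the tiers $\{l_h\}$ are already fixed; I may therefore compare the pre-update placement ($v_j$ at $l_j^{opt}$) with the post-update placement ($v_j$ at $l_i^{opt}$) while holding the rest of the assignment constant.

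First I would dispatch the processing time. The SIS update fires only when $l_j^{opt} \succ l_i^{opt}$, i.e.\ $l_i^{opt}$ sits strictly closer to the cloud in the order $d \succ e \succ c$. Invoking the standing assumption $t_j^d > t_j^e > t_j^c$, relocating $v_j$ to the more capable tier $l_i^{opt}$ yields $t_j^{l_i^{opt}} \le t_j^{l_j^{opt}}$, so the processing delay cannot grow, and strictly shrinks since the two tiers differ. A short case split over $(l_j^{opt}, l_i^{opt}) \in \{(d,e),(d,c),(e,c)\}$ makes this explicit if needed.

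Next I would treat the transmission latency, which is the heart of the argument. The defining property of a SIS vertex is $\mathcal{V}_j^p \subset \mathcal{V}_i^p$: every direct predecessor of $v_j$ is also a direct predecessor of $v_i$. After the update $v_j$ and $v_i$ are co-located at $l_i^{opt}$, and each $v_h \in \mathcal{V}_j^p$ already ships its output to tier $l_i^{opt}$ in order to feed $v_i$. Hence all data that $v_j$ requires is present at $l_i^{opt}$ independently of $v_j$, and $v_j$ obtains it through an intra-tier transfer only. By the paradigm assumption that within-tier and in-memory transmission is infinitesimal, the post-update ingress of $v_j$ is $\approx 0 \le \sum_{v_h \in \mathcal{V}_j^p} t_{hj}^{[l_h, l_j^{opt}]}$, the pre-update value, so the transmission latency likewise does not increase. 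Combining the two parts shows both terms improve simultaneously, which is the assertion.

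The main obstacle is making the transmission-sharing step rigorous rather than merely intuitive: I must argue that co-locating $v_j$ with $v_i$ genuinely zeroes out $v_j$'s cross-tier ingress, and this rests on two facts acting together -- the subset relation $\mathcal{V}_j^p \subset \mathcal{V}_i^p$, so that $v_i$'s incoming transmissions already deliver every input $v_j$ needs to $l_i^{opt}$, and the model-level assumption that intra-tier delay is negligible. This is precisely the saving that the per-vertex rule in Equation~\eqref{eqn:optimaltier} overlooks, which is why the SIS update is a genuine refinement. A secondary subtlety I would flag but not resolve is that the claim concerns only $v_j$'s own processing and ingress latency; relocating $v_j$ may change its egress transmission to successors, but that falls outside the stated proposition and is handled by the subsequent layer-by-layer passes.
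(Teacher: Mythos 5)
Your proposal is correct and follows essentially the same route as the paper's (much terser) proof: both argue that moving the SIS vertex $v_j$ to the later tier $l_i^{opt}$ strictly reduces its processing time by the ordering $t_j^d > t_j^e > t_j^c$, and incurs no transmission overhead because the subset relation $\mathcal{V}_j^p \subset \mathcal{V}_i^p$ means every input $v_j$ needs has already been shipped to $l_i^{opt}$ to feed $v_i$, leaving only a negligible intra-tier transfer. Your version merely makes explicit what the paper leaves implicit (the fixed predecessor tiers from the layer-by-layer order, and the caveat about egress to successors), so it is a faithful, more rigorous rendering of the same argument.
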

\begin{proof}
	Given a vertex~$v_i$ and one of its SIS vertices $v_j$ where $l_j^{opt} \succ l_i^{opt}$. 
	Since the inputs of $v_i$ are already transmitted 
    to the tier $l_i^{opt}$, therefore relocating the SIS vertex 
    that is at the previous tier $l_j^{opt}$ to $l_i^{opt}$ reduces the processing time 
    and brings no transmission delay overhead. 
\end{proof} 

\begin{algorithm}[t]
	\caption{Horizontal Partition Algorithm: HPA()}
	\label{alg:hpa}
	\KwData{DAG: $\mathcal{G=(V, L)}$ \; \hspace{0.83cm} Vertex weights: $\mathcal{T}_{\star}$ \; \hspace{0.87cm} Link weights: $\mathcal{T}_{\dagger}$ .}
	\KwResult{Optimal tiers $l_i^{opt}$ where $i=1, 2, \cdots, |\mathcal{V}|$ .}
	$\mathcal{Q} \leftarrow \mathtt{get\_longest\_path(} \mathcal{G} \mathtt{)}$ \;
	$\Delta \leftarrow \mathtt{get\_graph\_layer(} \mathcal{Q}, \mathcal{V} \mathtt{)}$ \;
	\ForEach{$\mathcal{Z}_q$ in $\Delta$}{
		\ForEach{$v_i$ in $\mathcal{Z}_q$}{
			$\Phi_i \leftarrow \mathtt{get\_pred\_loc(} v_i\mathtt{)}$ \;
			$\Gamma_i \leftarrow \mathtt{get\_loc\_choice(} \Phi_i \mathtt{)}$ \;
			\uIf{$\Gamma_i = \{c\}$}{
				$l_i^{opt} \leftarrow c$ \;
			}
			\Else{
				$l_i^{opt} \leftarrow \mathtt{get\_opt\_loc(}v_i, \Gamma_i, \mathcal{T}_{\star}, \mathcal{T}_{\dagger})$ \;
			}
		}
		$\mathtt{sis\_update(}\mathcal{Z}_q\mathtt{)}$\;
	}
\end{algorithm}

We now show the HPA in Algorithm~\ref{alg:hpa}. 
The algorithm calls~$\mathtt{get\_longest\_path()}$ to calculate 
the longest path from $v_0$ to all vertices in $\mathcal{G}$. 
With the result of the longest path, 
HPA constructs the graph layers by invoking the function $\mathtt{get\_graph\_layer()}$.
In each graph layer~$\mathcal{Z}_q$, HPA computes the optimal tiers for all vertices.
For every vertex~$v_i$ in the layer, the algorithm uses 
the function~$\mathtt{get\_pred\_loc()}$ to get the 
optimal tiers of $v_i$'s direct predecessors. 
Next, it leverages the idea presented in Proposition~\ref{prop:potential} 
and employs~$\mathtt{get\_loc\_choice()}$ to get all potential tiers of $v_i$. 
In particular, if the potential tier of $v_i$ is $c$, 
the optimal tier $l_i^{opt}$ is $c$. 
Otherwise, HPA computes the optimal tier for $v_i$ via function $\mathtt{get\_opt\_loc()}$ 
which leverages the heuristics in the optimal tier selection strategy. 
After the calculation of all vertices in the graph layer $\mathcal{Z}_q$,
$\mathtt{sis\_update()}$ performs SIS update for all vertices in $\mathcal{Z}_q$. 
HPA stops when it finishes processing all the graph layers. 

Resource changes and network dynamics lead to 
variations of DNN layer processing time and input data transmission latencies, 
which further affect the optimal locations to process DNN layers. 
Assuming that the optimal tier of a vertex changes,
HPA can accommodate the modification by locally adjusting the optimal tiers of its SIS vertices, 
its direct successors, and the SIS vertices of its direct successors. 
For instance, in Fig.~\ref{fig:inceptiondag}, supposing that $l_6^{opt}$ of $v_6$ 
changes to a different value, HPA recalculates $l_{10}^{opt}$ of $v_{10}$ 
since $v_{10}$ is the direct successor of $v_6$. 
To avoid constantly calculating the optimal tier to respond to the resource and network fluctuation, 
we can set upper and lower thresholds to limit the scope of the optimal tier alteration. 
HPA only recalculates the optimal tiers 
when DNN layer processing time or the network bandwidth is outside of the threshold range.

\subsection{Vertical Separation Module} 

The preliminary experimental results in TABLE~\ref{tab:edgebottleneck} show 
the inference latencies of DNNs after HPA given an input image of $3 \times 224 \times 224$. 
The device node is an NVIDIA Jetson Nano 2GB Developer Kit~\cite{nvidiajetsonnano2gb}, 
the edge node is a Linux machine with Intel Core i7-8700 CPU and 8 GB system memory, 
and the cloud node is a Linux server with NVIDIA GeForce RTX 2080 Ti GPU and 256 GB system memory. 
We observe that the processing time of the edge node is longer than that of the cloud node, 
causing the cloud node to be idle and waiting for the results from the edge node in the inference pipeline. 
The edge node becomes the bottleneck of the synergistic inference. 
To accelerate the inference at the edge tier, 
we design a parallel processing strategy to avoid the bottleneck condition. 
The parallelization option is not suitable for the resource stringent device node 
since processing raw input in parallel incurs privacy concerns and amount of communication overhead. 
Next, we introduce our parallel processing method. 

\begin{table}[t]
	\centering
	\caption{The synergistic inference time at three nodes.}
	\setlength{\extrarowheight}{2pt}
	\begin{tabular}{|| C{6 em} |  C{6 em} | C{6 em} | C{6 em}||}
		\hline
		\textbf{DNNs} & \textbf{Device Node (millisecond)} & \textbf{Edge Node (millisecond)} & \textbf{Cloud Node (millisecond)} \\ 
		\hline
		AlexNet & 2.2 & 3.6 & 1.4 \\ 
		\hline
		VGG-16 & 5.7 & 46.7 & 0.5 \\
		\hline
		ResNet-18 & 6.1 & 7.5 & 0.5 \\
		\hline
		Darknet-53 & 27.9 & 48.1 & 0.1\\
		\hline
		Inception-v4 & 21.4 & 46.4 & 16.7 \\
		\hline
	\end{tabular}
    \label{tab:edgebottleneck}
\end{table} 

The parameters of a convolutional layer are a set of learnable filters~\cite{goodfellow2016deep}.
Each filter is a weighted tensor spatially defined by its hyper-parameters 
that are width, height, and depth. 
An input feature map is the input activation for a given filter. 
The number of input feature maps is the same as the filter depth in a convolutional layer. 
Besides, a convolutional layer contains two hyper-parameters 
that are filter stride and padding for input feature maps. 
If padding exists, the convolutional layer adds entries to the borders of an input feature map, 
resulting in an input feature map with paddings. 
The convolutional layer systematically performs a dot product 
between the entries of the padded input feature maps and the filter. 
The results are assembled to an output feature map. 
We call this process a convolution operation. 
In order to optimize the inference latency at the resource constrained edge node, 
we leverage the idea of separating a sequence of 
correlated input feature maps to multiple feature map stacks. 
This idea is firstly proposed in DeepThings~\cite{zhou2019adaptive}.
However, DeepThings does not consider input feature maps with paddings, 
leading to the precision loss that affects the inference accuracy. 
To settle this issue, we derive a parallel convolutional layer inference module 
without loss of accuracy referred to as \textbf{v}ertical \textbf{s}eparation \textbf{m}odule (VSM). 

Given a sequence of $k$ convolutional layers, 
we describe each convolutional layer as $c_i$ where $i=1, 2, \cdots, k$. 
The input feature maps of layer $c_i$ have the dimension of 
$\mathcal{W}_i \times \mathcal{H}_i \times \mathcal{D}_i$ (width $\times$ height $\times$ depth). 
For the filter of $c_i$, we designate the filter size 
as $\mathcal{F}_i^w \times \mathcal{F}_i^h \times \mathcal{D}_i$ 
(width $\times$ height $\times$ depth) 
with a horizontal stride of $\mathcal{S}_i^w$ 
and a vertical stride of $\mathcal{S}_i^h$. 
The padding is $\mathcal{P}_i^w$ horizontally and $\mathcal{P}_i^h$ vertically.
Consequently, the input feature map size of layer $c_{i}$ 
has the coming relation with the input feature map size of layer $c_{i-1}$: 
\begin{equation} \label{eqn:fmaps}
\begin{split}
\mathcal{W}_{i} & = \dfrac{\mathcal{W}_{i-1} - \mathcal{F}_{i-1}^w + 2 \times \mathcal{P}_{i-1}^w}{\mathcal{S}_{i-1}^w} + 1, \\
\mathcal{H}_{i} & = \dfrac{\mathcal{H}_{i-1} - \mathcal{F}_{i-1}^h + 2 \times \mathcal{P}_{i-1}^h}{\mathcal{S}_{i-1}^h} + 1.
\end{split}
\end{equation} 

We divide the input feature maps of layer $c_i$ into $\mathcal{A} \times \mathcal{B} $ 
non-overlapping continuous \textit{tiles} whose depth is $\mathcal{D}_i$. 
Generally, we index an entry of the input feature map with two-dimensional coordinates. 
Therefore, we use the coordinates of the top left corner 
and the bottom right corner of a tile to locate the tile in the input feature maps. 
Mathematically, the tile $\tau_i^{(a, b)}=(\alpha_i^{(a, b)}, \beta_i^{(a, b)})$ 
at layer $c_i$ where $a=0, 1, \cdots, \mathcal{A}-1;$ \mbox{$b=0, 1, \cdots, \mathcal{B}-1$}. 
Specifically, the tile at the top left corner of 
the layer $c_i$'s input feature map is $\tau_i^{(0, 0)}$.
We represent the top left coordinate of $\tau_i^{(a, b)}$ as 
$\alpha_i^{(a, b)}=(x_{i}^{[\alpha, (a, b)]}, y_{i}^{[\alpha, (a, b)]})$ 
and the bottom right coordinate as $\beta_i^{(a, b)}=(x_{i}^{[\beta, (a, b)]}, y_{i}^{[\beta, (a, b)]})$. 
Particularly, $\alpha_i^{(0, 0)}=(0, 0)$.
In the light of Equation~\eqref{eqn:fmaps}, 
given the coordinates of $\tau_i^{(a, b)}$ $(i > 1)$, 
we compute the coordinates of the correlated tile with paddings at layer $c_{i-1}$ 
denoted by $\hat{\tau}_{i-1}^{(a, b)}=(\hat{\alpha}_{i-1}^{(a, b)}, \hat{\beta}_{i-1}^{(a, b)})$ 
where $\hat{\alpha}_{i-1}^{(a, b)}=(\hat{x}_{i-1}^{[\hat{\alpha}, (a,b)]}, \hat{y}_{i-1}^{[\hat{\alpha}, (a,b)]})$ 
and $\hat{\beta}_{i-1}^{(a, b)}=(\hat{x}_{i-1}^{[\hat{\beta}, (a,b)]}, \hat{y}_{i-1}^{[\hat{\beta}, (a,b)]})$ in following manner:
\begin{equation} \label{eqn:fmapscoord}
\begin{split}
\hat{x}_{i-1}^{[\hat{\alpha}, (a, b)]} & = \mathcal{S}_{i-1}^w \times x_{i}^{[\alpha, (a, b)]},\\
\hat{y}_{i-1}^{[\hat{\alpha}, (a, b)]} & = \mathcal{S}_{i-1}^h \times y_{i}^{[\alpha, (a, b)]},\\
\hat{x}_{i-1}^{[\hat{\beta}, (a, b)]} & = \mathcal{S}_{i-1}^w \times (x_{i}^{[\beta, (a, b)]} - 1) + \mathcal{F}_{i-1}^w,\\
\hat{y}_{i-1}^{[\hat{\beta}, (a, b)]} & = \mathcal{S}_{i-1}^h \times (y_{i}^{[\beta, (a, b)]} - 1) + \mathcal{F}_{i-1}^h.
\end{split}
\end{equation}

\begin{figure}[t]
	\centering
	\includegraphics[width=1\columnwidth]{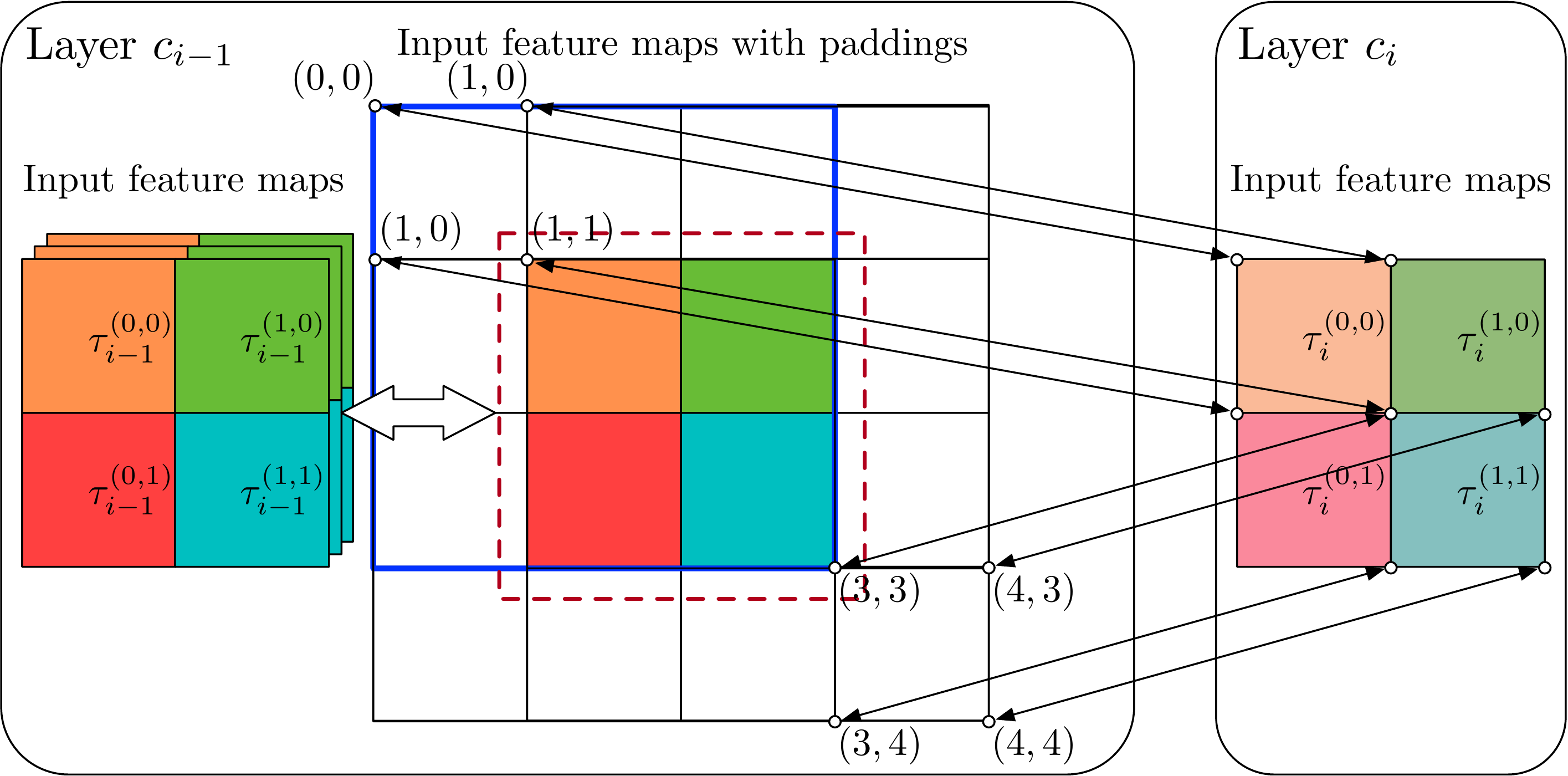}
	\caption{Layer $c_{i-1}$ has an input feature map of $2 \times 2 \times 3$ 
with a padding of $\mathcal{P}_{i-1}^w=\mathcal{P}_{i-1}^h=1$ and a stride of $\mathcal{S}_{i-1}^w$\mbox{$=\mathcal{S}_{i-1}^h=1$}. }
	\label{fig:coordalter}
\end{figure} 

Considering the paddings added to the input feature maps of layer $c_{i-1}$, 
given the coordinates of a padded tile $\hat{\tau}_{i-1}^{(a, b)}$ \mbox{$(i>1)$}, 
to compute the coordinates of tile $\tau_{i-1}^{(a, b)}$, 
we need to remove the paddings from $\hat{\tau}_{i-1}^{(a, b)}$. 
This alters the coordinates of the padded tile, 
transforming $\hat{\tau}_{i-1}^{(a, b)}$ 
to $\tau_{i-1}^{(a, b)}$.
The coordination of $\tau_{i-1}^{(a, b)}$ is described as follows: 
\begin{equation} \label{eqn:paddingalter}
\begin{split}
x_{i-1}^{[\alpha, (a,b)]} &= \max(0, \hat{x}_{i-1}^{[\hat{\alpha}, (a,b)]} - \mathcal{P}_{i-1}^w),\\
y_{i-1}^{[\alpha, (a,b)]} &= \max(0, \hat{y}_{i-1}^{[\hat{\alpha}, (a,b)]} - \mathcal{P}_{i-1}^h),\\
x_{i-1}^{[\beta, (a,b)]} &=
\begin{cases}
    \mathcal{W}_{i-1}, \quad \text{if } \hat{x}_{i-1}^{[\hat{\beta}, (a,b)]}= \mathcal{W}_{i-1} + 2\times\mathcal{P}_{i-1}^w, \\
    \max(0, \hat{x}_{i-1}^{[\hat{\beta}, (a,b)]} - \mathcal{P}_{i-1}^w), \quad \text{otherwise},
\end{cases}\\
y_{i-1}^{[\beta, (a,b)]} &=
\begin{cases}
    \mathcal{H}_{i-1}, \quad  \text{if } \hat{y}_{i-1}^{[\hat{\beta}, (a,b)]}= \mathcal{H}_{i-1} + 2\times\mathcal{P}_{i-1}^h, \\
    \max(0, \hat{y}_{i-1}^{[\hat{\beta}, (a,b)]} - \mathcal{P}_{i-1}^h), \quad \text{otherwise}.
\end{cases}\\
\end{split}
\end{equation}

Fig.~\ref{fig:coordalter} depicts the process of generating layer $c_{i-1}$'s 
input feature maps from the input feature maps of layer $c_{i}$. 
When the stride is $1\times 1$ at layer $c_{i-1}$, 
the $3 \times 3 \times 3$ filter moves one entry at a time, 
producing the input feature map of layer $c_{i}$. 
Reversely, given the input feature maps of layer $c_i$, we separate the feature maps 
into $2\times 2$ tiles. With the coordinates of the four tiles at layer $c_i$, 
we find the corresponding padded tiles
at layer $c_{i-1}$ with Equation~\eqref{eqn:fmapscoord}. 
Then, we obtain the tiles at layer $c_{i-1}$ 
by offsetting the paddings from the padded tiles via Equation~\eqref{eqn:paddingalter}. 
We regard the procedure of computing tile $\tau_{i-1}^{(a, b)}$ 
from $\tau_{i}^{(a, b)}$ as \textbf{r}everse \textbf{t}ile \textbf{c}alculation (RTC). 

\begin{algorithm}[t]
	\caption{Vertical Separation Module: VSM()}
	\label{alg:vsm}
	\KwData{$k$ correlated convolutional layers: $c_i$ where $i=1, 2, \cdots, k$\; \hspace{0.93cm}Decision of separation: $\mathcal{A}\times\mathcal{B}$ tiles\; \hspace{0.8cm} The tiles at layer $c_{k+1}$: $T_{k+1}=\{\tau_{k+1}^{(a, b)}\}$ \\\hspace{0.825cm} $(a=0, 1, \cdots,\mathcal{A}-1; b=0, 1, \cdots, \mathcal{B}-1)$ .}
	\KwResult{Coordinates of the tiles at $c_1$ .}
	\ForEach{$\tau_{k+1}^{(a, b)}$ in $T_{k+1}$}{
	    \ForEach{$i \leftarrow k$ to $1$}{
	        $\tau_i^{(a, b)} \leftarrow \mathtt{RTC(}c_i, \tau_{i+1}^{(a, b)}\mathtt{)}$ \;
	    }
	}
\end{algorithm}

To accelerate the inference at the edge tier without loss of accuracy, 
we develop VSM whose overall procedure is shown in Algorithm~\ref{alg:vsm}.
Assuming that a sequence of $k$ convolutional layers are assigned to an edge node, 
to help to describe our algorithm, we add a virtual convolutional layer $c_{k+1}$, 
whose input feature maps are the output feature maps of layer $c_k$. 
% Layer $c_{k+1}$ does not change the values or volumes of its input feature maps.  
VSM separates the input feature maps of layer $c_{k+1}$ into $\mathcal{A}\times\mathcal{B}$ 
non-overlapping continuous tiles. For each tile at layer $c_{k+1}$, 
VSM finds the coordinates of corresponding tiles from layer $c_k$ to layer $c_1$ 
via a series of RTC operations. 
We refer to a stack of correlated tiles from layer $c_1$ to $c_k$ as \textit{fused tiles}. 
Next, VSM locates the tiles at layer $c_1$ and splits the input feature maps of $c_1$ 
according to the coordinates of the tiles. 
VSM then feeds the tiles at layer $c_1$ to $\mathcal{A}\times\mathcal{B}$ edge nodes,  
each of which holds the inference parameters and hyper-parameters of the $k$ convolutional layers. 
Each edge node possesses one tile and processes a fused tile stack 
independently via convolution operations, 
generating the tile of the output feature maps at layer $c_k$. 
Finally, an edge node gathers and combines 
the $\mathcal{A}\times\mathcal{B}$ tiles of the output feature maps at layer $c_k$. 
The results are the output feature maps of layer $c_k$, which are transferred to a cloud node.
We neglect batch normalization layers and activation layers 
in the middle of two convolutional layers, since their operations do not change 
the volume of input feature maps. 
Moreover, pooling layers that are used to sub-sample feature maps 
between two convolutional layers are separated 
and fused by VSM in the same way as the convolutional layers.
Fig.~\ref{fig:vsm} shows four fused tile stacks 
of three consecutive convolutional layers. 
We assign the four fused tile stacks to four edge nodes. 
\begin{figure}[t]
	\centering
	\includegraphics[width=1\columnwidth]{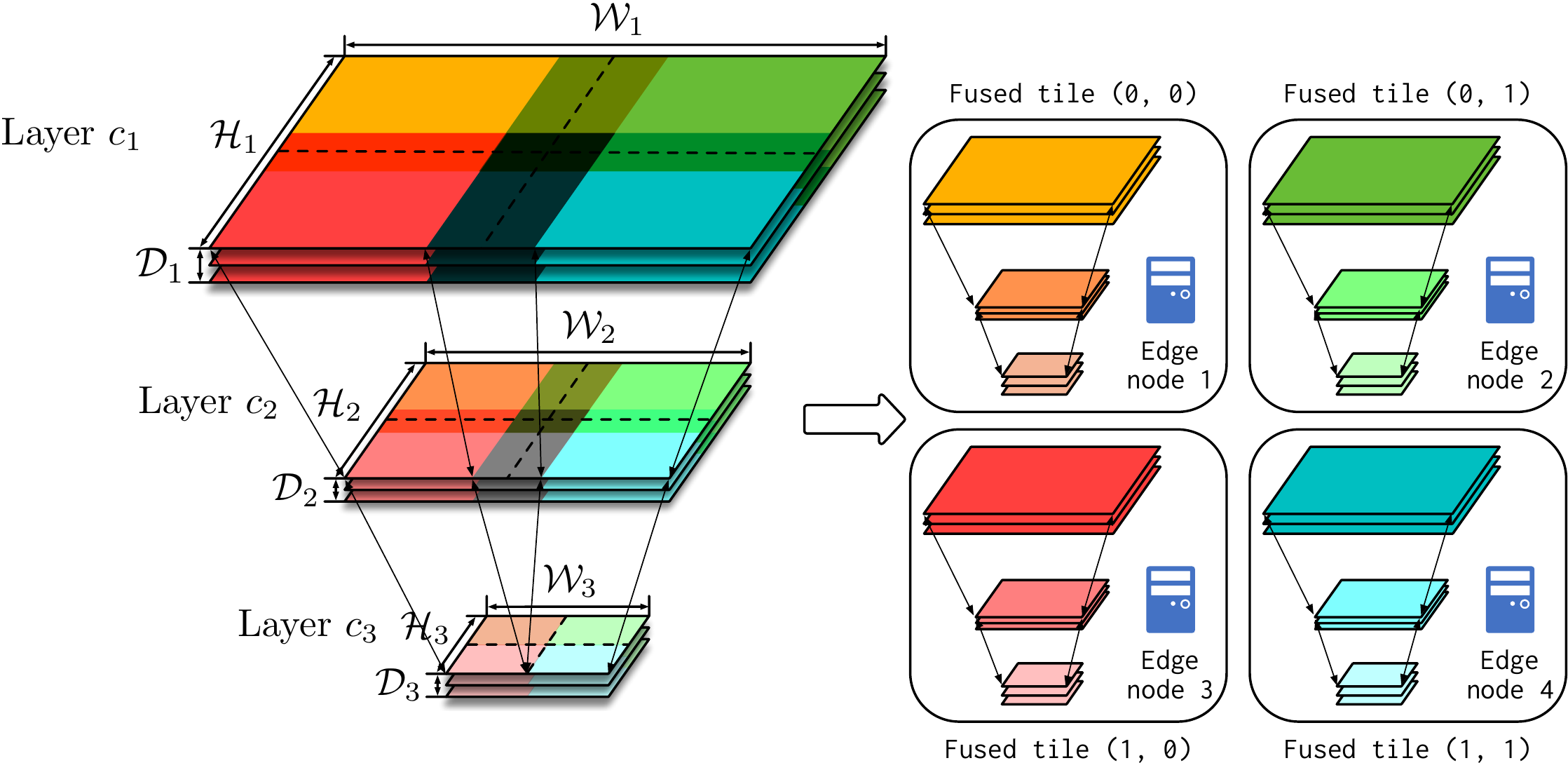}
	\caption{A sequence of input feature maps of three consecutive convolutional layers are divided into $2 \times 2$ fused tile stacks. Each fused tile stack is assigned to an edge node.}
	\label{fig:vsm}
\end{figure}

\section{implementation}
\label{sec:implementation}

We introduce our implementation details in this section. 
We consider a scenario where the device node is mobile.
The edge node and the device node are in the same local area network (LAN), 
both of which connect to the cloud node via the Internet~\cite{abbas2017mobile}.
Hence, we use a Raspberry Pi 4 model B with 4 GB system memory as the device node. 
The edge node is a Linux machine with Intel Core i7-8700 CPU and 8 GB system memory. 
We employ a remote server with 
NVIDIA GeForce RTX 2080 Ti GPU and 256 GB system memory. 

\textbf{Responsibility.} We implement the profiler and the offline partition framework 
on a dedicated computation node, 
which monitors running conditions, performs dynamic partitioning according to the conditions, 
and distributes partitions to the online execution nodes. 
For the online execution procedure, 
the device node is responsible for collecting the input, 
processing through the DNN layers allocated to it, 
and transferring the output to an edge node or a cloud node. 
The edge node handles the output from the device node 
and passes the intermediate inference results to the cloud node for further computation. 
Depending on the partition decision, the inference pipeline may involve only part of the computing tiers. 

\textbf{Software stack.} We implement a client-server interface using gRPC~\cite{grpc} 
in each node to accommodate inter-process communication.  
The dedicated computation node loads a trained DNN in the ONNX~\cite{onnx} format. 
We process the DNN models with PyTorch~\cite{pytorch} 
and build the DAG representation of the DNN with NetworkX~\cite{networkx}. 
After applying HPA, we rebuild the computation graph from the partitioned DAG, 
transform the computation graph into a partial DNN 
by reconstructing the $\mathtt{forward()}$ function in PyTorch, 
and store the partial DNN in the ONNX format. 

\textbf{Datasets and models.} We evaluate D$^3$ on the ImageNet (ILSVRC2012)~\cite{deng2009imagenet} dataset
that contains 150000 validation and test images of 1000 categories. 
We compress the size of the input images to $3\times 224 \times 224$.
Then, we feed the images to the device node at 30 FPS for 100 seconds 
and test the per-image average end-to-end latency. 
We evaluate our methods across 5 widely used DNNs: 
AlexNet, VGG-16, ResNet-18, \mbox{Darknet-53}, and Inception-v4, 
all of which are trained before deployment.

\section{evaluation}
\label{sec:evaluation}

\begin{figure*}[t]
	\centering
	\begin{subfigure}[t]{0.24\textwidth}
		\centering
		\includegraphics[width=\textwidth]{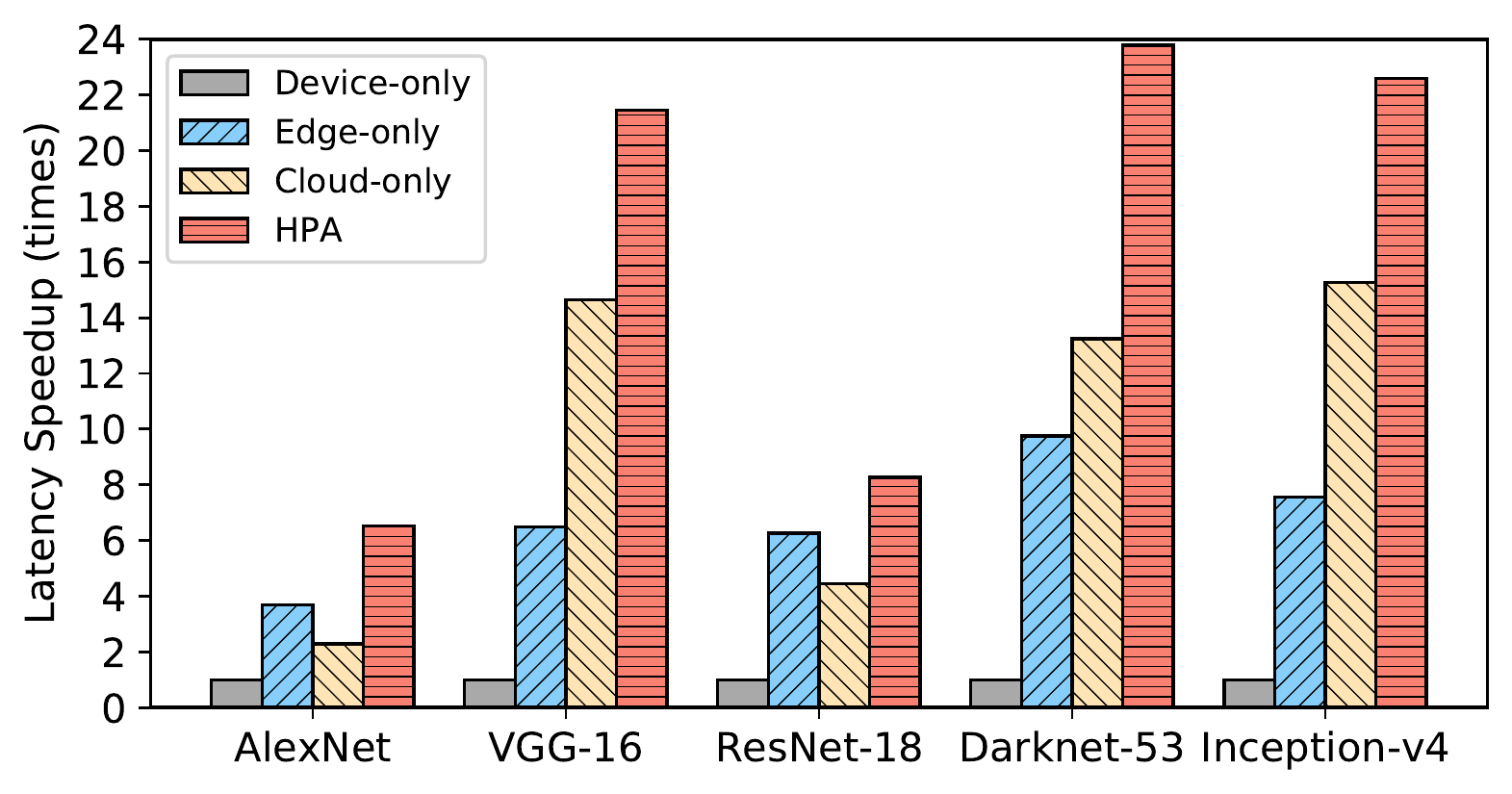}
		\caption{Wi-Fi}
		\label{plot:localspeedupwifi}
	\end{subfigure}
	\hfill
	\begin{subfigure}[t]{0.24\textwidth}
		\centering
		\includegraphics[width=\textwidth]{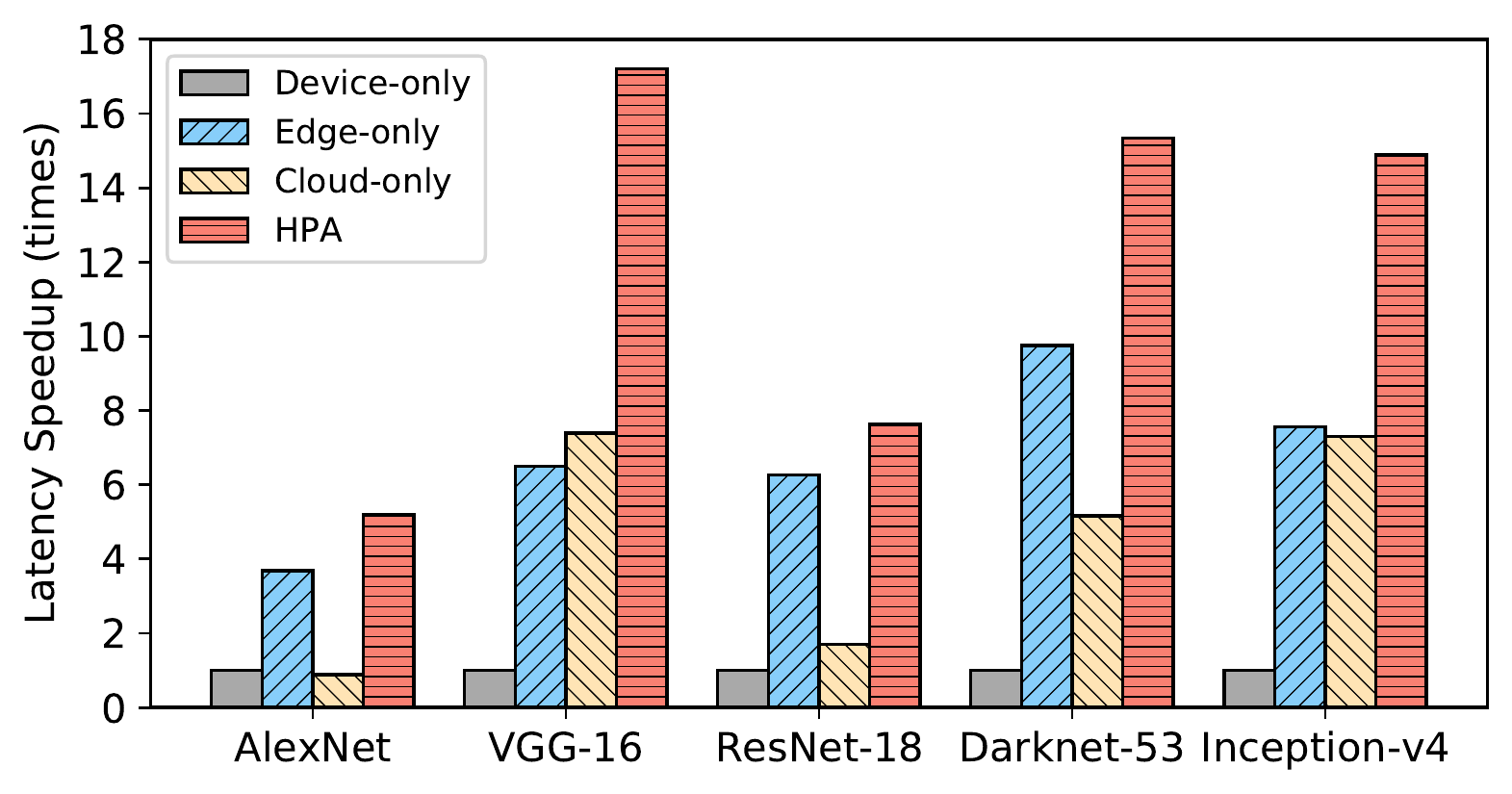}
		\caption{4G}
		\label{plot:localspeedup4g}
	\end{subfigure}
	\hfill
	\begin{subfigure}[t]{0.24\textwidth}
		\centering
		\includegraphics[width=\textwidth]{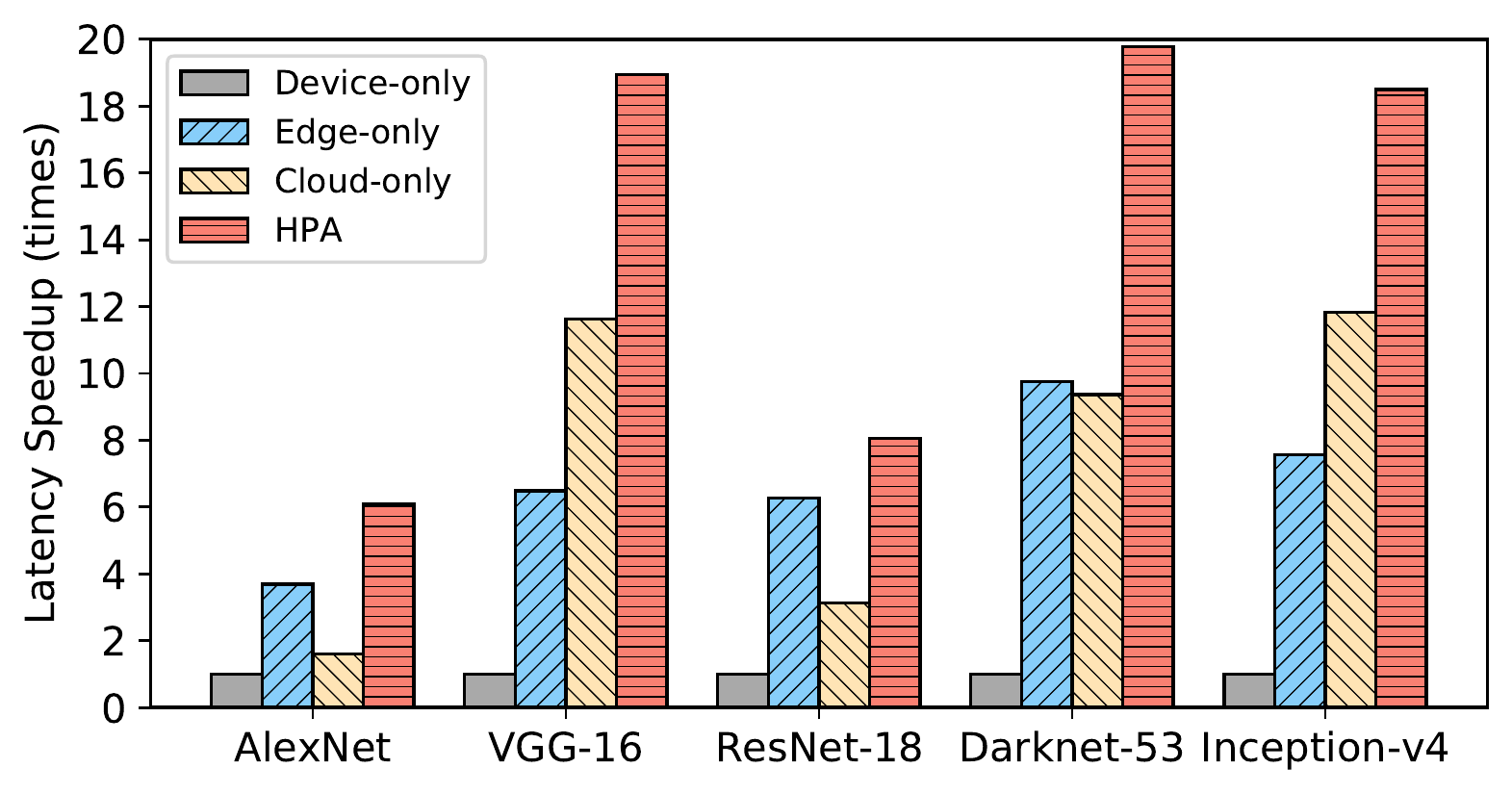}
		\caption{5G}
		\label{plot:localspeedup5g}
	\end{subfigure}
	\hfill
	\begin{subfigure}[t]{0.24\textwidth}
		\centering
		\includegraphics[width=\textwidth]{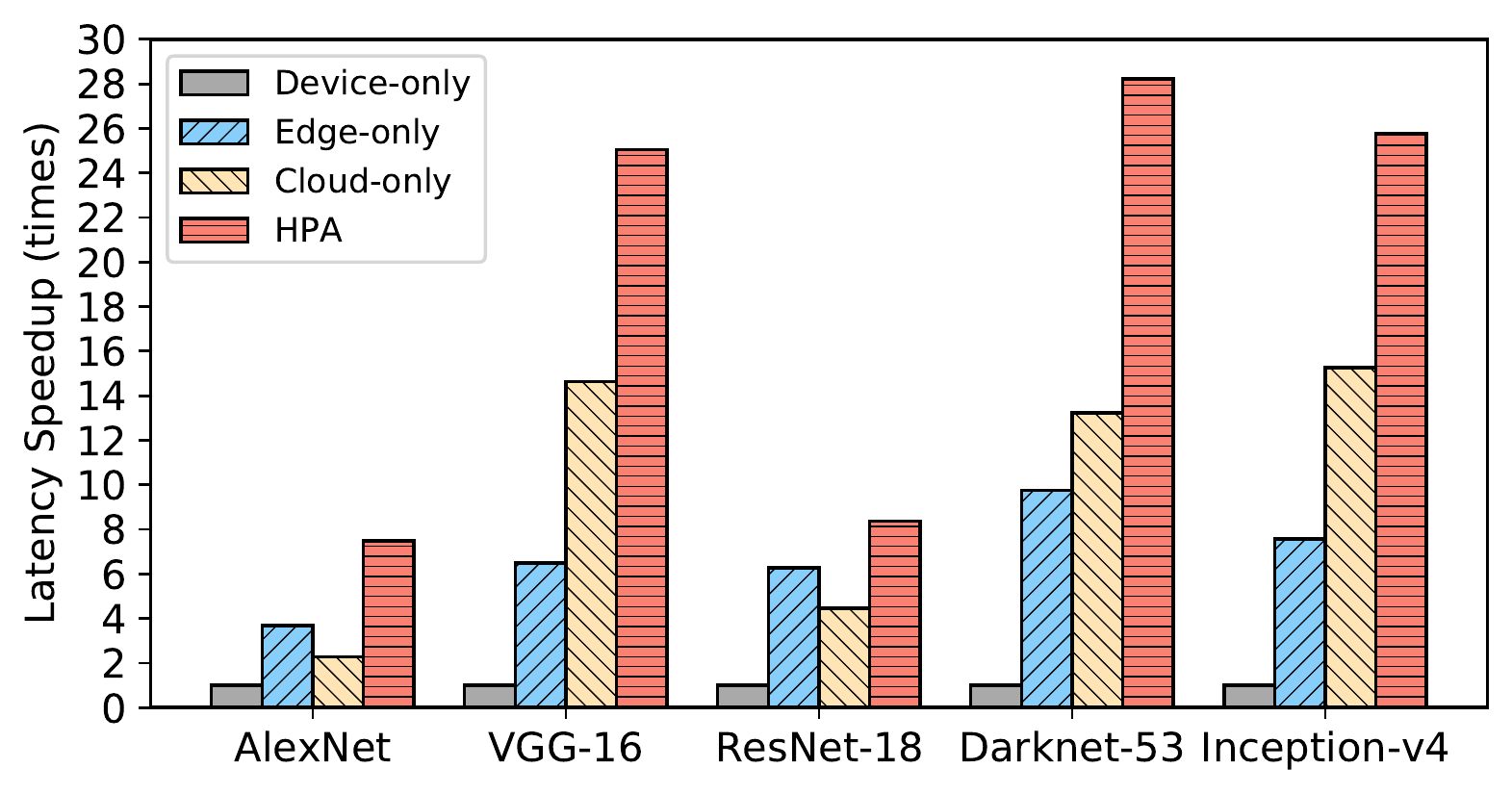}
		\caption{Optical Network}
		\label{plot:localspeedupoptical}
	\end{subfigure}
	\caption{End-to-end latency speedup comparison among HPA, device-only, edge-only, and cloud-only under different network conditions.}
	\label{plot:localspeedup}
\end{figure*}

\begin{figure*}[t]
	\centering
	\begin{subfigure}[t]{0.24\textwidth}
		\centering
		\includegraphics[width=\textwidth]{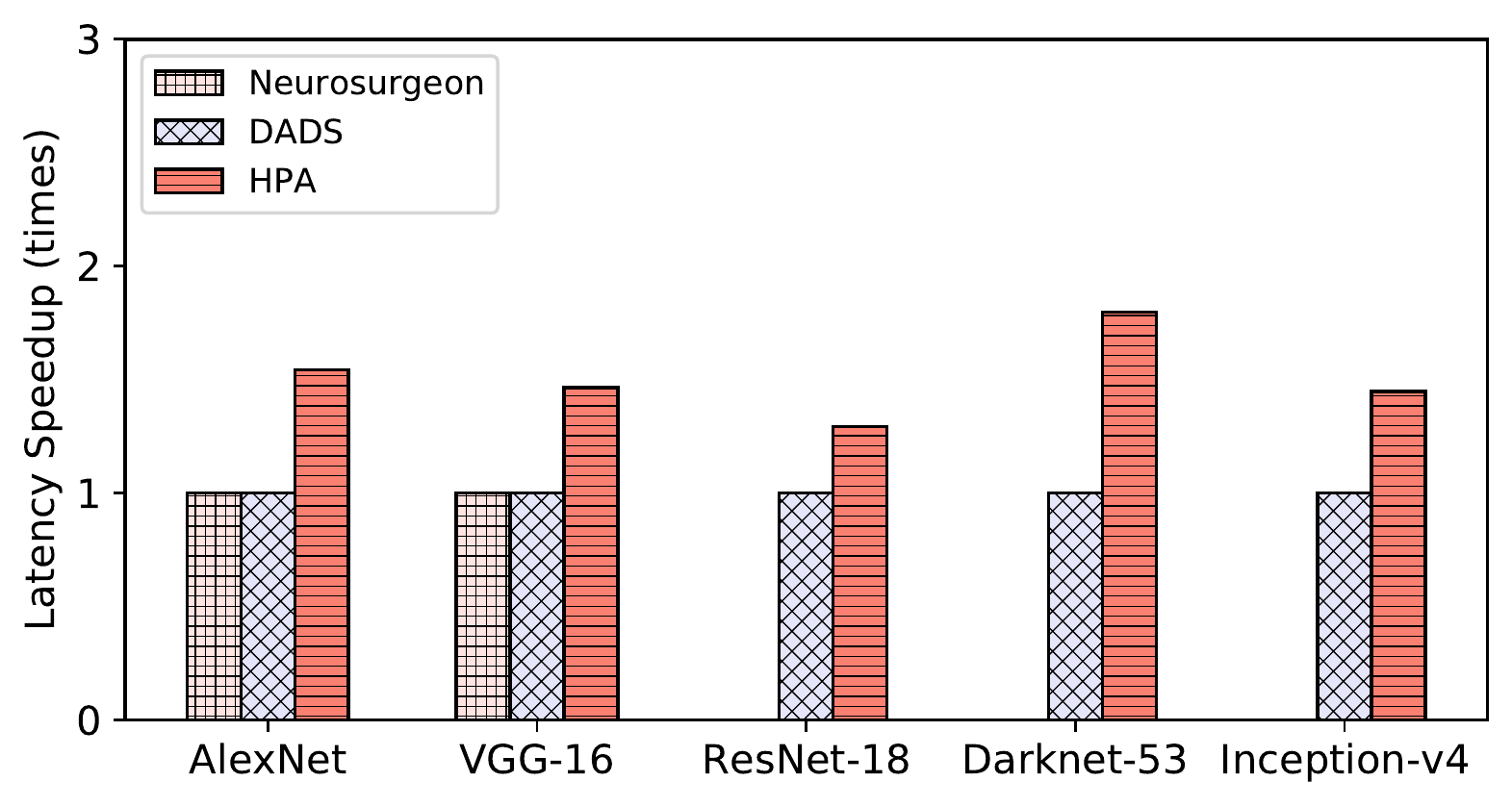}
		\caption{Wi-Fi}
		\label{plot:localspeedupdadswifi}
	\end{subfigure}
	\hfill
	\begin{subfigure}[t]{0.24\textwidth}
		\centering
		\includegraphics[width=\textwidth]{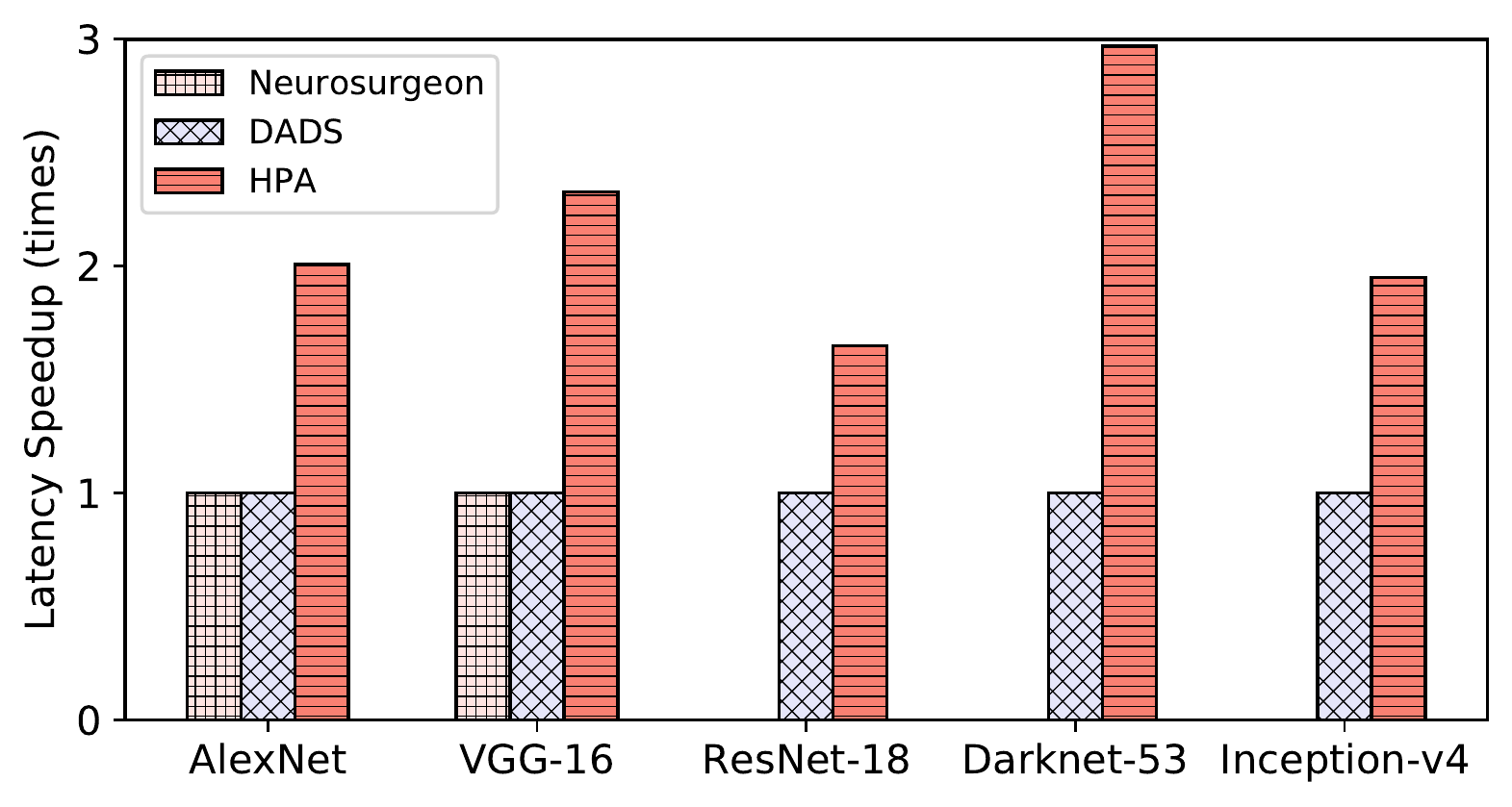}
		\caption{4G}
		\label{plot:localspeedupdads4g}
	\end{subfigure}
	\hfill
	\begin{subfigure}[t]{0.24\textwidth}
		\centering
		\includegraphics[width=\textwidth]{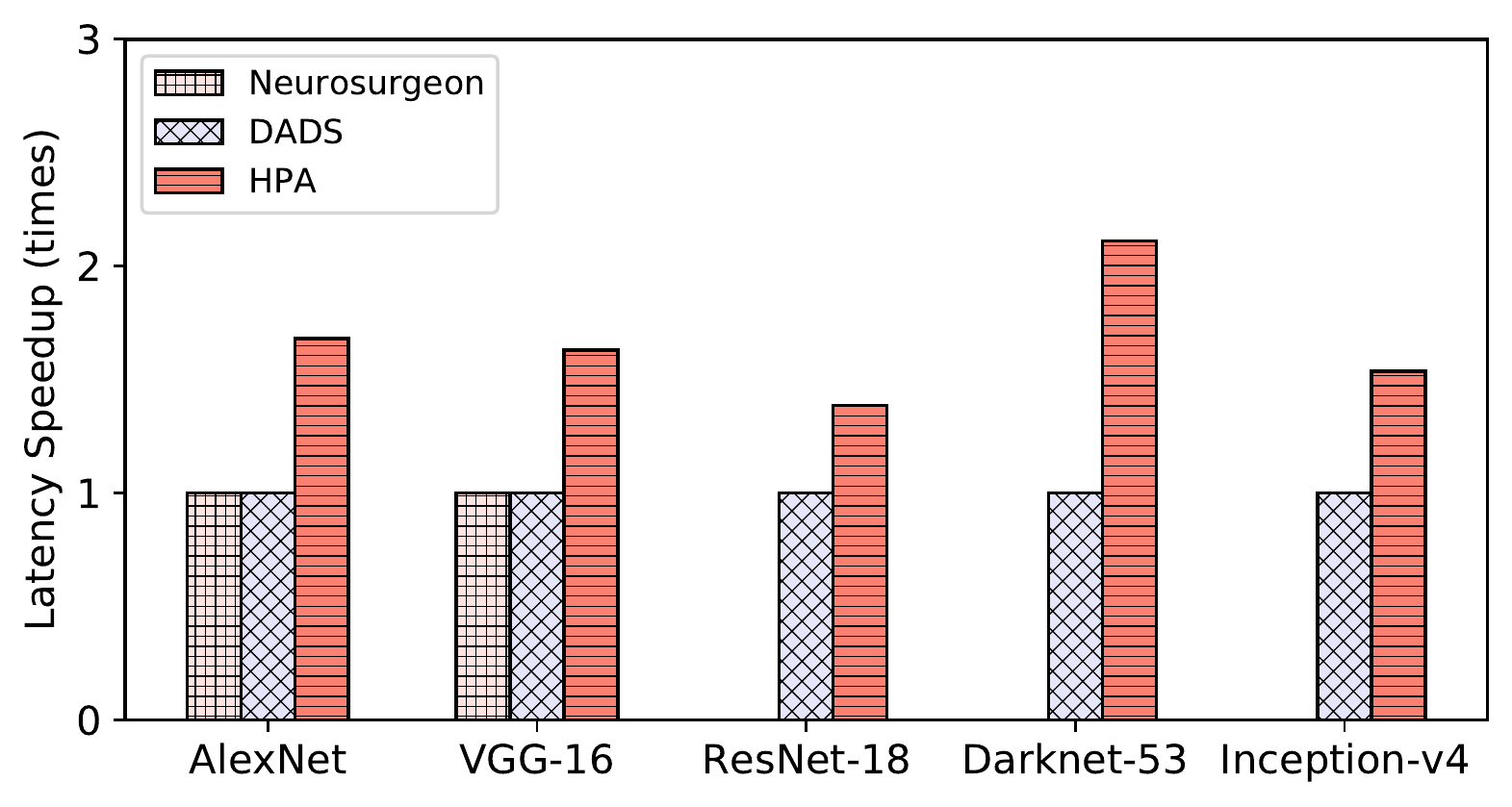}
		\caption{5G}
		\label{plot:localspeedupdads5g}
	\end{subfigure}
	\hfill
	\begin{subfigure}[t]{0.24\textwidth}
		\centering
		\includegraphics[width=\textwidth]{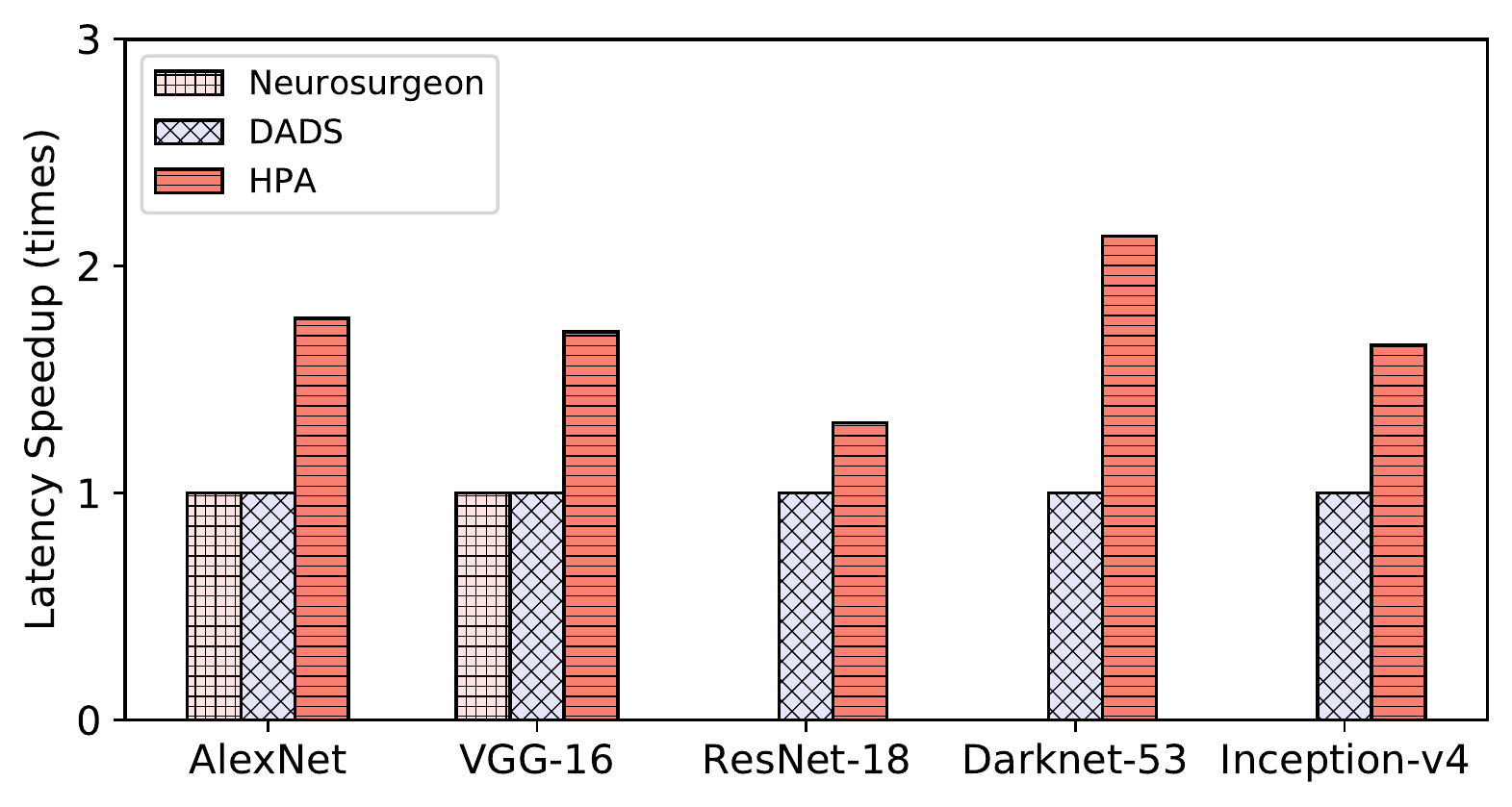}
		\caption{Optical Network}
		\label{plot:localspeedupdadsoptical}
	\end{subfigure}
	\caption{End-to-end latency speedup comparison among HPA, Neurosurgeon~\cite{kang2017neurosurgeon}, and DADS~\cite{hu2019dynamic} under different network conditions.}
	\label{plot:localspeedupdads}
\end{figure*}
This section presents the performance comparison of D$^3$ with its counterparts. 
In addition to executing a DNN model solely on a device node, an edge node, or a cloud node, 
we choose two state-of-the-art precision lossless DNN offloading systems 
Neurosurgeon~\cite{kang2017neurosurgeon} and DADS~\cite{hu2019dynamic} for comparison. 
We mainly examine the effectiveness of HPA and VSM from two performance metrics: 
end-to-end inference speedup and per-image communication overhead. 
To investigate the performance of D$^3$ in various network settings, 
we test the performance metrics under different network conditions shown in 
TABLE~\ref{tab:networkconditions}, which shows the average uplink rate. 
Particularly, we consider the following scenarios.
The communication link between the device node and the edge node is a Wi-Fi running at~5~GHz 
(Gigabit Ethernet 802.3, Wi-Fi 5 IEEE 802.11ac). 
Both the device node and the edge node connect to the the cloud node 
with the same type of communication link, which is Wi-Fi, 4G, or 5G. 
Besides, when the edge node employs optical network to bridge to the cloud node, 
the device node connects to the cloud node via the 5~GHz Wi-Fi. 
In our experiment, we alter the type of communication link between the LAN 
and the cloud node (i.e., Wi-Fi, 4G, 5G, Optical Network), and examine the performance of our algorithms. 

\subsection{End-to-end Inference Speedup} 
This subsection compares the end-to-end inference speedup of D$^3$ with device-only, edge-only, cloud-only, 
Neurosurgeon, and DADS under different network conditions. 
In the edge-only and the cloud-only method, input data is collected by the device node 
and transmitted to the edge node and cloud node respectively for processing.
The experimental results validate the effectiveness of HPA and VSM. 
\begin{table}[t]
	\centering
	\caption{The average uplink rate (Mbps) between two nodes.}
	\setlength{\extrarowheight}{2pt}
	\begin{tabular}{|| C{6.5 em} |  C{4 em} | C{4 em} | C{4 em} | C{4 em}||}
		\hline
		\textbf{DNNs} & \textbf{Wi-Fi} & \textbf{4G} & \textbf{5G} & \textbf{Optical Network} \\ 
		\hline
		device to edge & 84.95 & N.A. & N.A. & N.A.\\ 
		\hline
		edge to cloud & 31.53 & 13.79 & 22.75 & 50.23\\
		\hline
		device to cloud & 18.75 & 6.12 & 11.64 & N.A.\\
		\hline
	\end{tabular}
    \label{tab:networkconditions}
\end{table}

\begin{figure*}[t]
	\centering
	\begin{minipage}[t]{0.49\textwidth}
		\centering
	    \includegraphics[width=\linewidth, align=t]{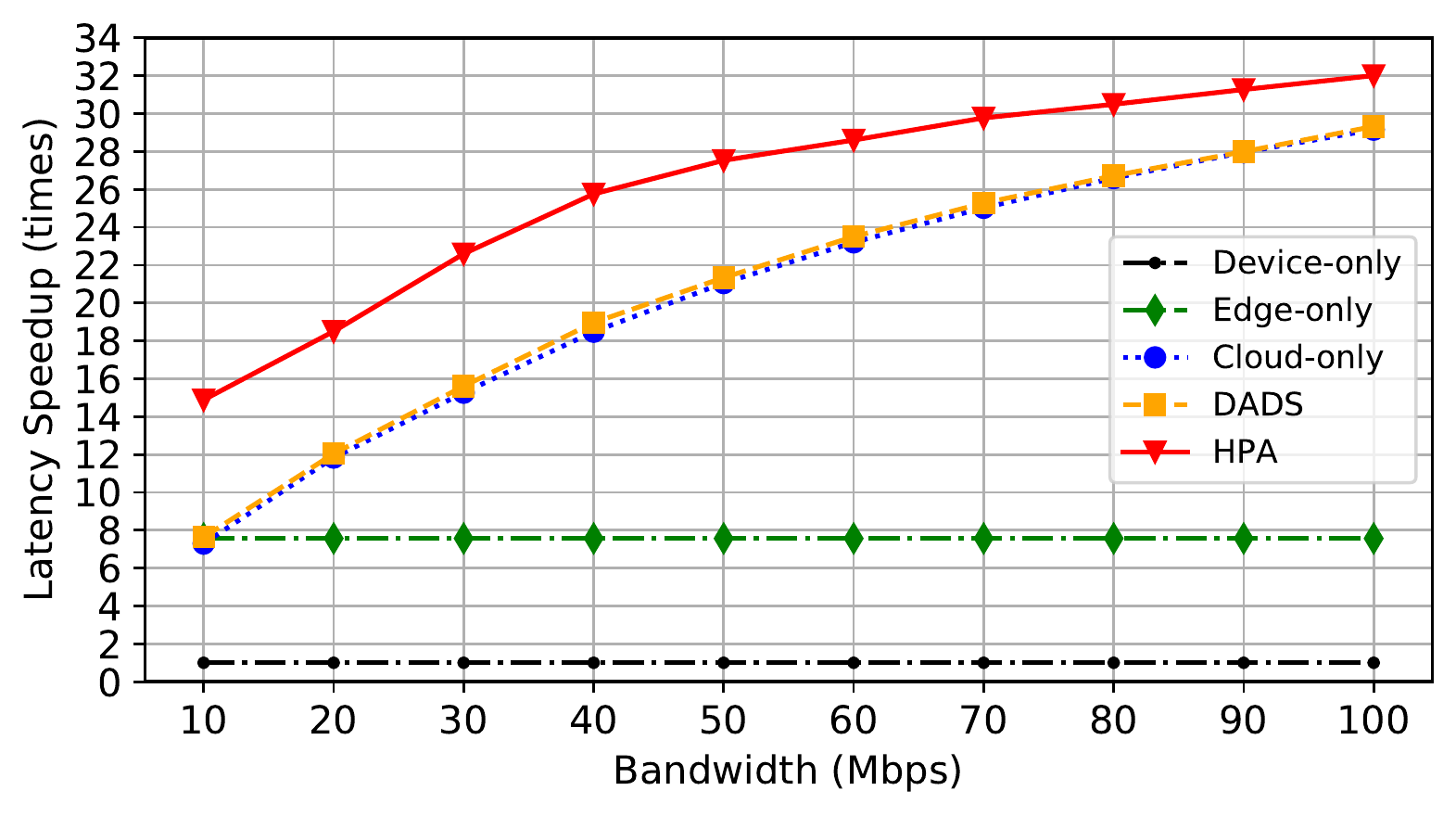}
	    \caption{We measure the latency speedup of Inception-v4 under various bandwidth between the LAN and the cloud node.}
	    \label{plot:bandwidtheffect}
	\end{minipage}\hfill
	\begin{minipage}[t]{0.49\textwidth}
		\centering
	    \includegraphics[width=\linewidth, align=t]{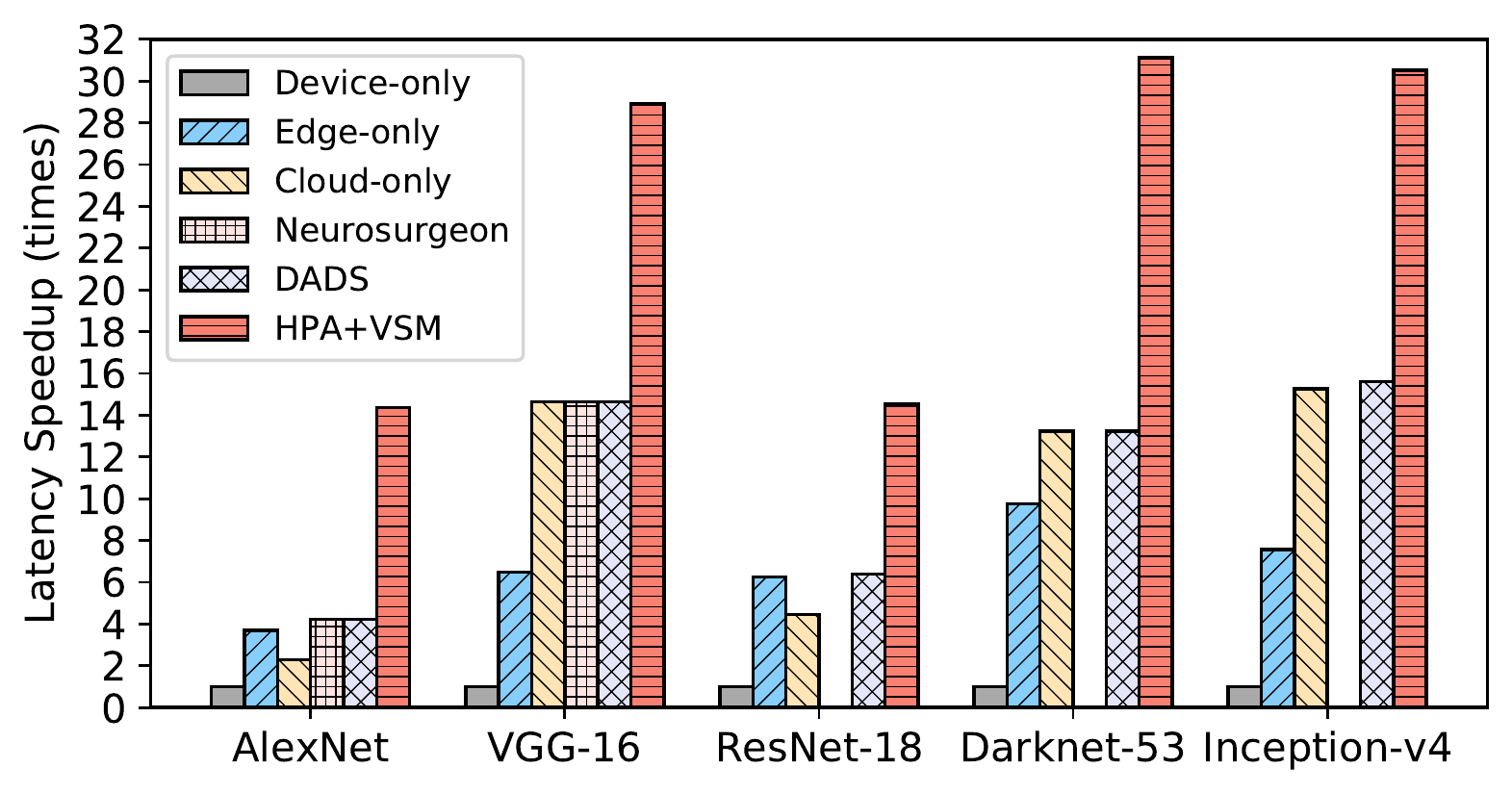}
	    \caption{The latency speedup when applying both HPA and VSM. The device node and the edge nodes connect to the cloud node via Wi-Fi.}
	    \label{plot:wifi_vsm}
	\end{minipage}
\end{figure*}

\begin{figure*}[t]
	\centering
	\begin{subfigure}[t]{0.195\textwidth}
		\centering
		\includegraphics[width=\textwidth]{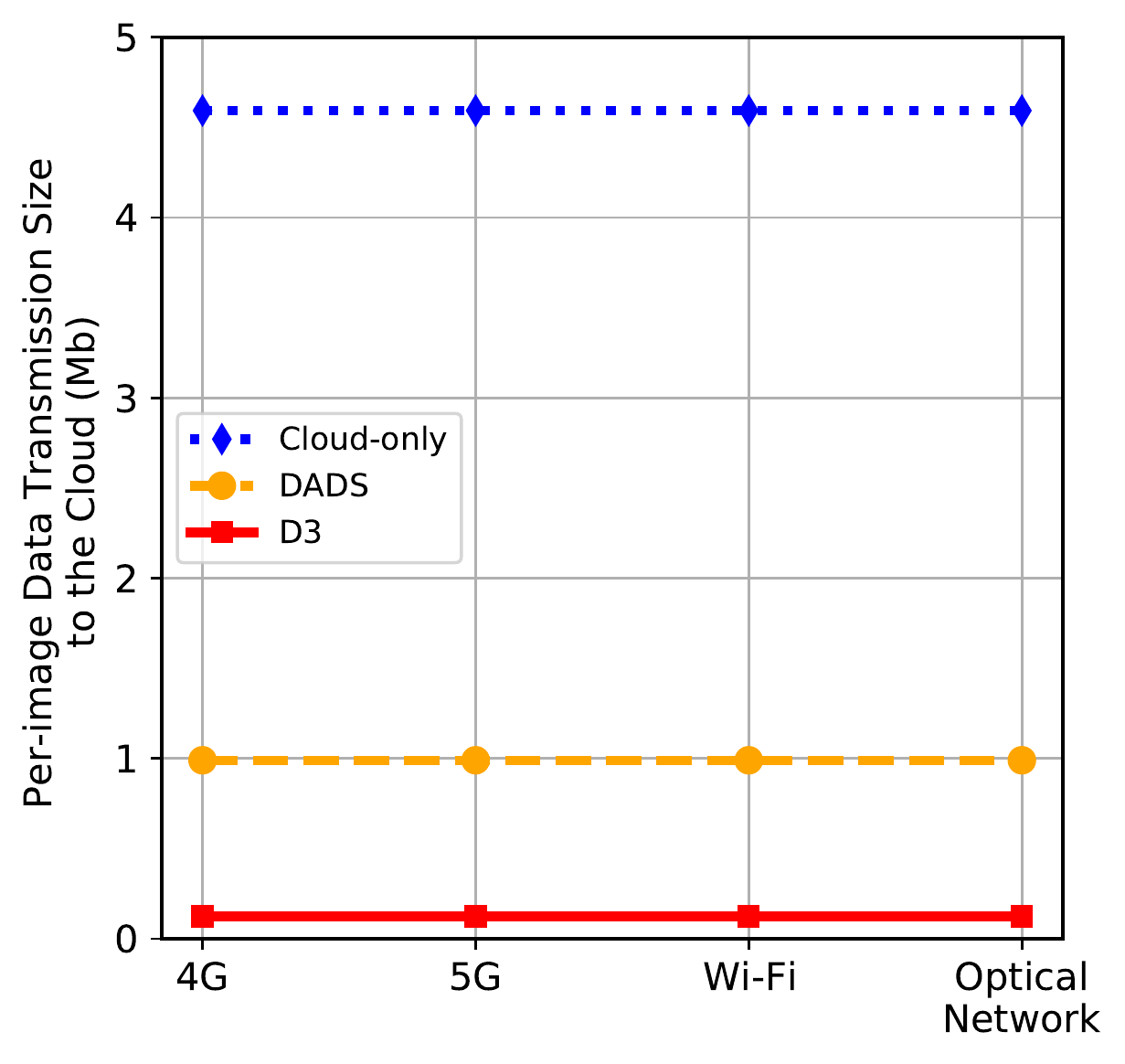}
		\caption{AlexNet}
		\label{plot:dataalex}
	\end{subfigure}
	\hfill
	\begin{subfigure}[t]{0.195\textwidth}
		\centering
		\includegraphics[width=\textwidth]{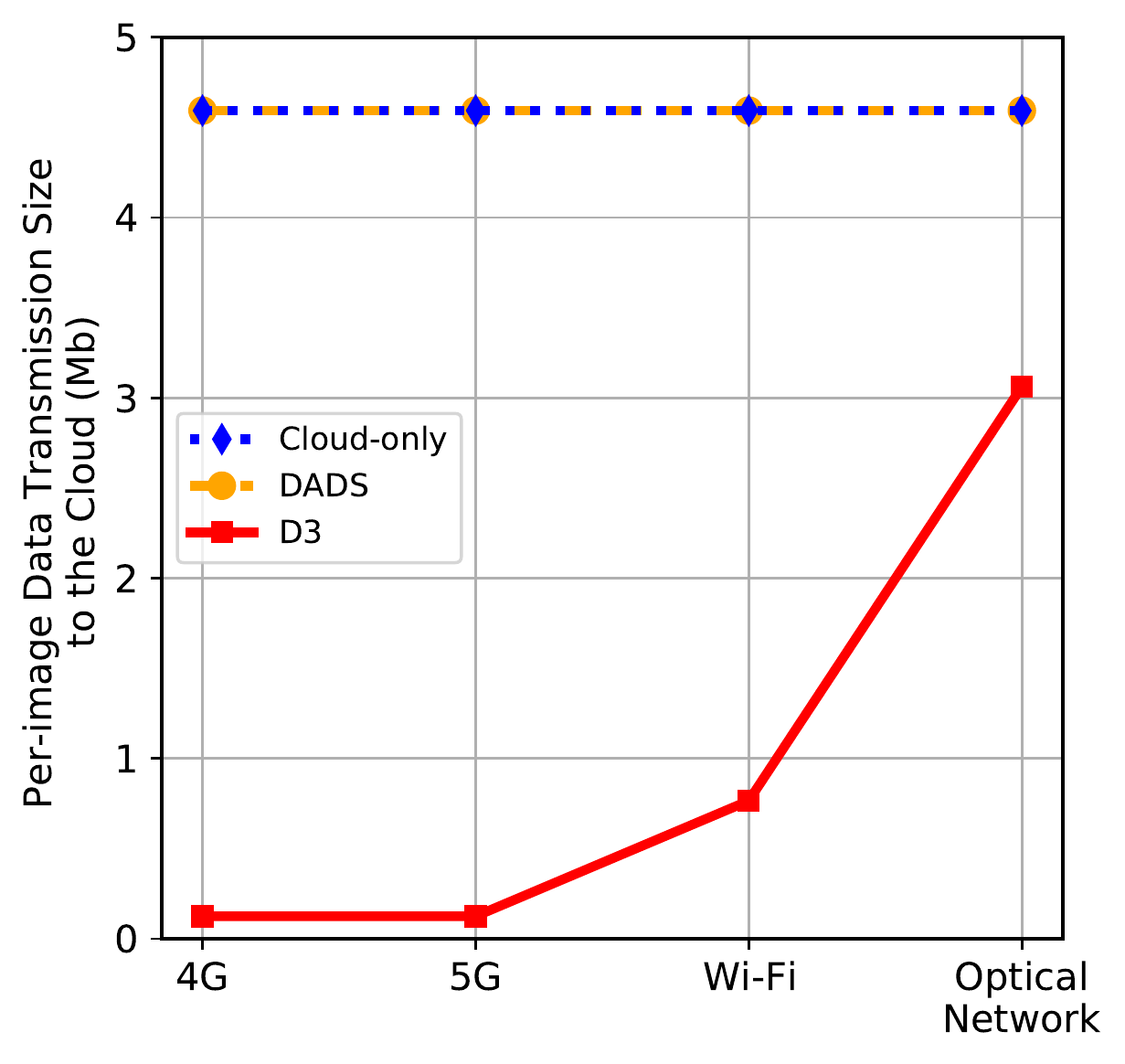}
		\caption{VGG-16}
		\label{plot:datavgg}
	\end{subfigure}
	\hfill
	\begin{subfigure}[t]{0.195\textwidth}
		\centering
		\includegraphics[width=\textwidth]{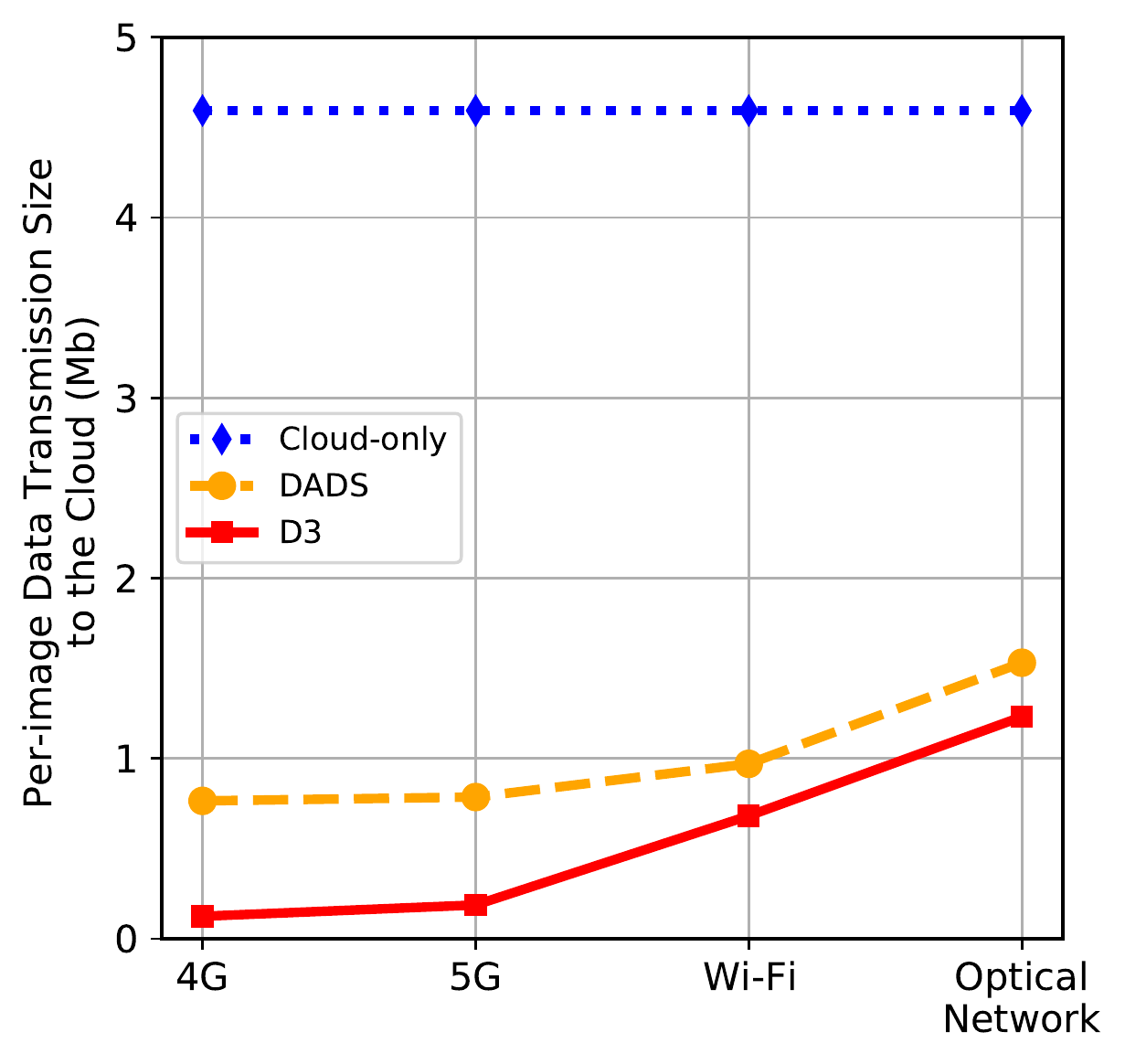}
		\caption{ResNet-18}
		\label{plot:dataresnet}
	\end{subfigure}
	\hfill
	\begin{subfigure}[t]{0.195\textwidth}
		\centering
		\includegraphics[width=\textwidth]{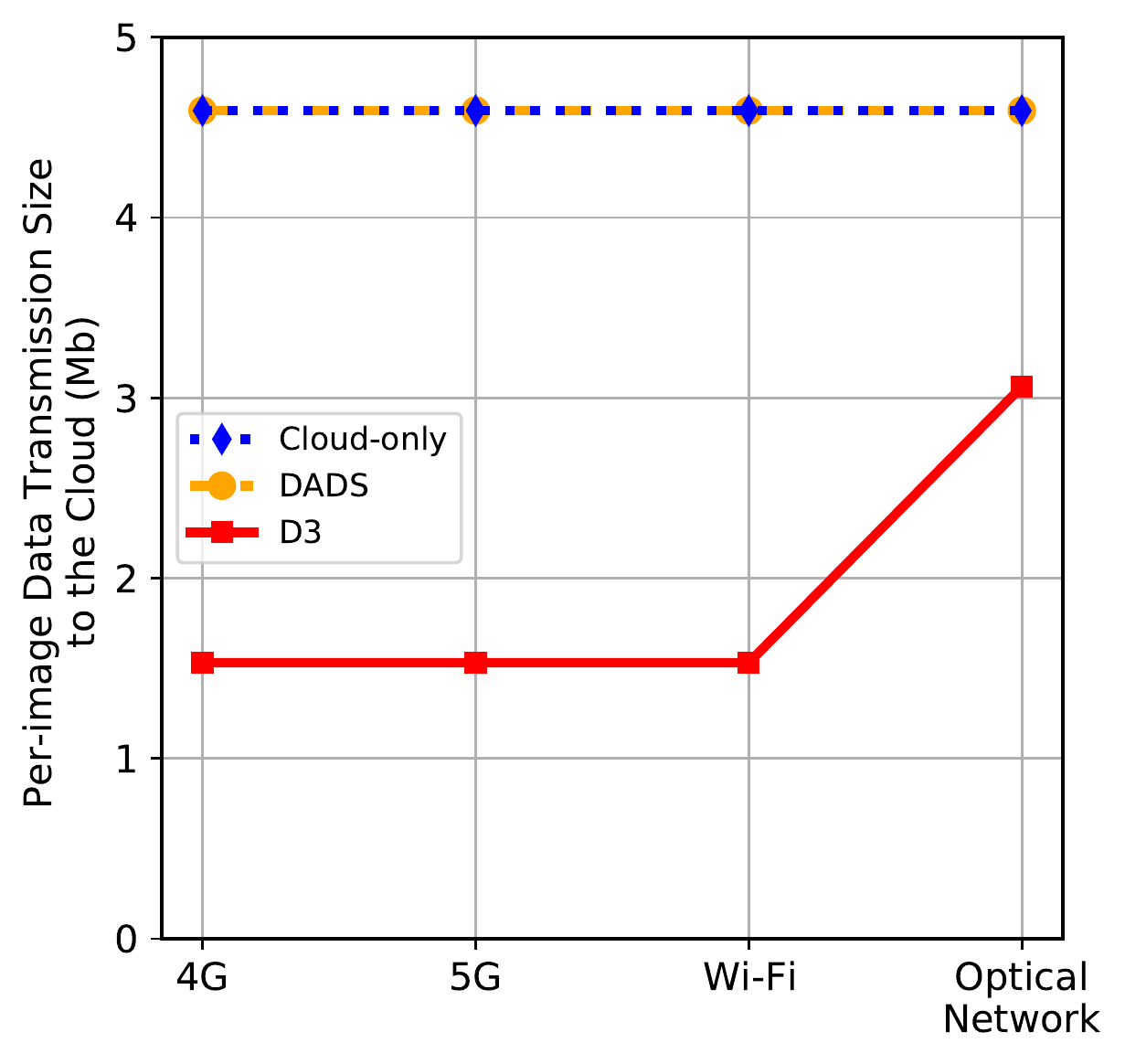}
		\caption{Darknet-53}
		\label{plot:datadarknet}
	\end{subfigure}
	\hfill
	\begin{subfigure}[t]{0.195\textwidth}
		\centering
		\includegraphics[width=\textwidth]{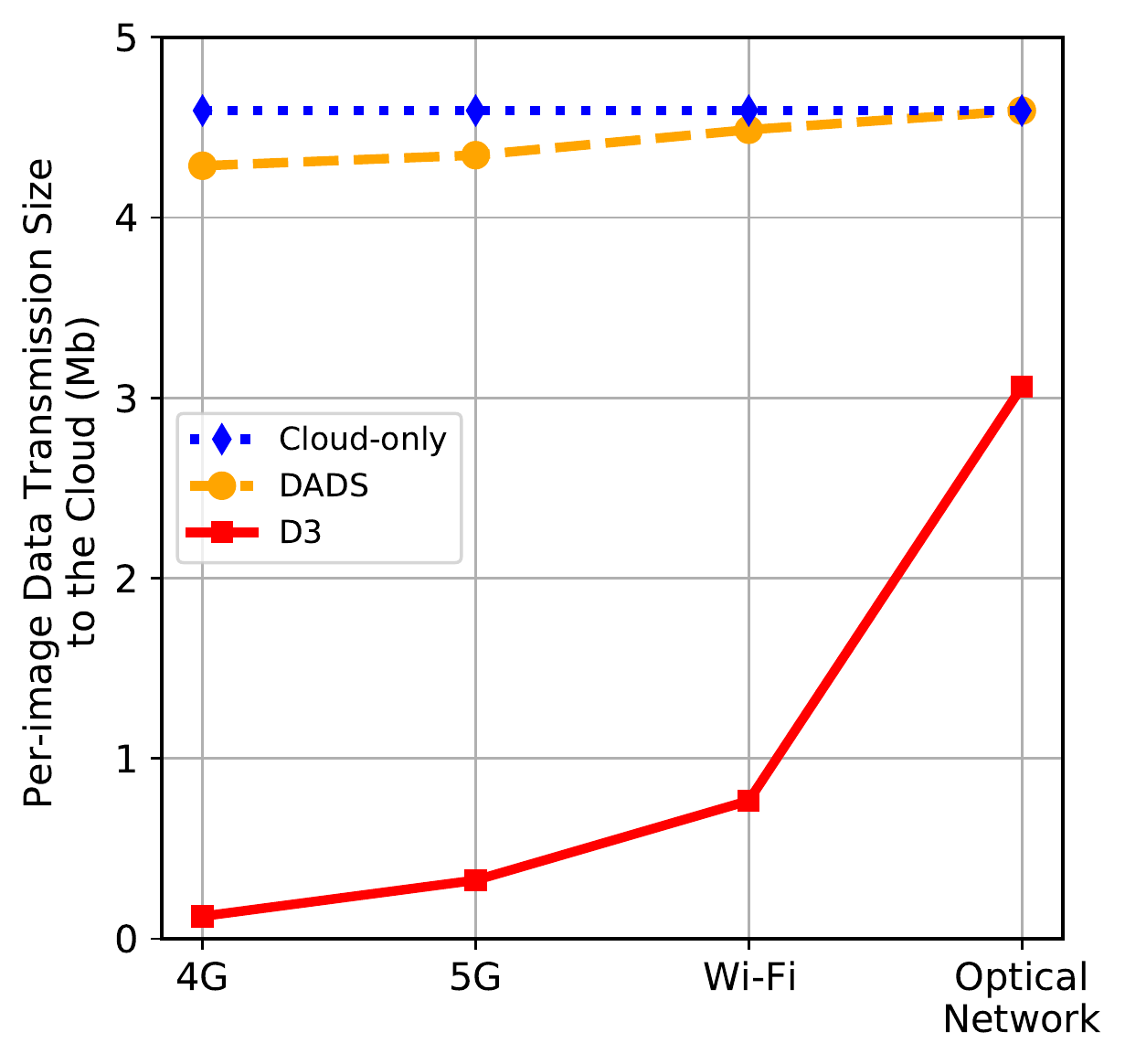}
		\caption{Inception-v4}
		\label{plot:datainception}
	\end{subfigure}
	\caption{Per-image communication overhead comparison among D$^3$, Cloud-only, and DADS~\cite{hu2019dynamic} for different models under different network conditions.}
	\label{plot:data}
\end{figure*} 

Fig.~\ref{plot:localspeedup} demonstrates the end-to-end latency speedup of HPA over device-only approach, 
edge-only approach, and cloud only approach. We set the device-only as the baseline for our comparison. 
The result shows that HPA accelerates the end-to-end latency up to 28.2$\times$, 3.85$\times$, and 5.90$\times$ 
compared with device-only, edge-only, and cloud-only approaches respectively. 
We discover that the device-only is constrained by the computation resources. 
When the model demands more computation power, offloading part of the model 
to the edge node obtains a higher inference speedup.
Meanwhile, the cloud-only method is limited by the low bandwidth between the device node and the cloud node. 
By increasing the bandwidth between the device node and the cloud node, 
the cloud-only method achieves a lower inference latency 
in that the input data transmission time between the device node and the cloud node is smaller. 
Fig.~\ref{plot:localspeedupdads} illustrates the end-to-end latency speedup among HPA, Neurosurgeon, and DADS. 
Since Neurosurgeon can only partition the DNN of chain topology, 
it is not applicable for ResNet-18, Darknet-53, and Inception-v4, which are of DAG topology. 
The experimental results show that HPA outperforms Neurosurgeon 
up to 2.33$\times$ in the chain topology DNNs. 
Compared with DADS, HPA accelerates the inference latency up to 2.97$\times$ in DNNs of DAG topology. 
Notably, we observe that HPA attains more inference speedup when the model gets larger,  
which requires more computation resources. 
To show the impact of network bandwidth variations, 
we apply HPA to Inception-v4 and measure its end-to-end latency speedup 
under different network bandwidth between the LAN and the cloud node. 
From Fig.~\ref{plot:bandwidtheffect}, we monitor that when the network bandwidth 
between the LAN and the cloud node increases, 
HPA tends to offload more DNN layers to the cloud to optimize the end-to-end inference latency. 

We apply VSM to the convolutional layers that are assigned to the edge node under HPA.
We employ four Linux machines with Intel Core i7-8700 CPU and 8 GB system memory as the edge nodes. 
Both the device node and the edge nodes connect to the cloud node via Wi-Fi. 
From Fig.~\ref{plot:wifi_vsm}, we can see the latency speedup when both HPA and VSM are applied. 
The D$^3$ system surpasses the device-only, edge-only, cloud-only, Neurosurgeon, 
and DADS up to 31.13$\times$, 4.46$\times$, 6.28$\times$, 3.4$\times$, and 3.4$\times$ respectively. 
When VSM is deployed, the processing time of convolutional layers at the edge tier
does not shrink to~$1/4$ of the original processing time compared with the HPA-only approach, 
since there are spatial overlaps among the fused tile stacks, 
which in turn leads to computational redundancy. 

Under the condition that both the device node and the edge nodes connect to the cloud node via Wi-Fi, 
HPA improves the end-to-end latency to~1.29$\times$~-~1.8$\times$, 
and HPA+VSM accelerates the latency to~1.96$\times$~-~3.4$\times$, 
where the state-of-the-art is the baseline. 
Overall, the end-to-end latency measurements verify the effectiveness of D$^3$. 

\subsection{Per-image Communication Overhead}
Typically, a cloud node locates at a remote place, which is accessible from a LAN through the Internet. 
A lower data transmission between the LAN and the cloud server 
reduces the data transmission over the network core 
hence mitigating the Internet congestion~\cite{harchol2020making}. 
Transferring the intermediate results of a DNN to the cloud curtails the data transmission between the LAN and the cloud.
We test the per-image communication overhead to the cloud node of D$^3$ and its counterparts
and show the outcomes in Fig.~\ref{plot:data}. 
D$^3$ shrinks the per-image data transmission size on the Internet backbone 
to 27.21\%~-~66.67\% of the cloud-only approach. 
D$^3$ reduces the per-image data transmission size 
on the Internet backbone to 27.21\%~-~80.42\% when DADS is the baseline. 
When the network bandwidth between the LAN and the cloud server becomes larger, 
D$^3$ tends to offload more DNN layers and transmit more intermediate data to the cloud server.

\section{conclusion}
\label{sec:conclusion}
With the increasing computation capability of mobile devices 
and the growing need of accelerating DNN inference, 
there is an expansion demand of performing synergistic 
DNN inference across device, edge, and cloud without precision loss.
In this work, we introduce D$^3$, a system that consists of HPA and VSM. 
HPA partitions a DNN model into three parts 
according to the per-layer processing time and the inter-layer transmission delay of the DNN. 
Besides, VSM further divides the feature maps of DNN layers
assigned to the edge node into multiple fused tile stacks for parallel processing. 
Under our experiment settings, D$^3$ system provides up to 3.4$\times$ end-to-end inference speedup 
on chain topology DNNs and accelerates the end-to-end inference up to 2.35$\times$ 
on DAG topology DNNs compared to the state-of-the-art. 
In addition, the per-image data transmission size reduces to 27.21\% of the state-of-the-art, 
which relieves network congestion and communication cost. 

\bibliographystyle{IEEEtran}
\bibliography{d3}

\end{document}